\documentclass[11pt,fleqn,reqno]{amsart} 

\usepackage{amsrefs}  
\usepackage{amsthm}
\usepackage{epsfig}
\usepackage{graphicx}
\usepackage{enumerate}
\usepackage{color} 
\usepackage[toc,page]{appendix}
\usepackage[dvipsnames]{xcolor}
\usepackage{bm,bbm}
\usepackage{cancel}
\usepackage{enumitem}
\usepackage{physics}
\usepackage{amsmath,euscript}
\usepackage{amsthm}
\usepackage{amssymb}
\usepackage{graphicx}
\usepackage{mathrsfs}
\usepackage{epsf}
\usepackage{xcolor}
\usepackage{psfrag}
\usepackage{enumerate}
\usepackage{amsfonts,amssymb,amsthm,amsmath} 
\usepackage{hyperref}

\usepackage{enumitem} 
\usepackage{xspace}
\usepackage[margin=1.07in]{geometry}

\usepackage{todonotes} 
\usepackage{float} 
\usetikzlibrary{decorations.markings}

\usepackage{soul}

\newtheorem{theorem}{Theorem}[section]
\newtheorem{corollary}[theorem]{Corollary}

\newtheorem{lemma}[theorem]{Lemma}
\newtheorem{prop}[theorem]{Proposition}

\theoremstyle{remark}

\theoremstyle{definition}

\numberwithin{thm}{section}
\numberwithin{equation}{section}

\usepackage{soul}
\setstcolor{red}

\definecolor{green}{rgb}{0.0, 0.5, 0.5}
\definecolor{yellow}{rgb}{0.5, 0.5, 0}
\definecolor{lgray}{gray}{0.9}
\definecolor{llgray}{gray}{0.95}
\definecolor{lllgray}{gray}{0.975}

\newcommand{\nc}{\newcommand}

\nc{\la}{\label}
\nc{\ba}{\begin{array}}
\nc{\ea}{\end{array}}
\nc{\bs}{\begin{split}}
\nc{\es}{\end{split}}

\newcommand{\R}{\mathbb{R}}
\newcommand{\C}{\mathbb{C}}

\newcommand{\cB}{\mathcal{B}}

\newcommand{\cH}{\mathcal{H}}

\newcommand{\cP}{\mathcal{P}}         

\newcommand{\A}{{\rm{A\,}}}
\newcommand{\B}{{\rm{B\,}}}

\nc{\al}{\alpha}
\nc{\Sig}{\Sigma} 
\nc{\G}{\Gamma}
\nc{\et}{\eta} 
  
\nc{\g}{\gamma}
\nc{\gam}{\gamma}
\nc{\ka}{\kappa}
\nc{\lam}{\lambda}
\nc{\Lam}{\Lambda}
\nc{\Om}{\Omega}
\nc{\om}{\omega}
\nc{\ta}{\tau}
\nc{\w}{\omega}
\nc{\io}{\iota}
\nc{\z}{\zeta}
\nc{\s}{\sigma}
\nc{\sig}{\sigma}
\nc{\Si}{\Sigma}
\nc{\vphi}{\varphi}
\newcommand{\eps}{\epsilon}

\nc{\bP}{\bar{P}}
\nc{\bQ}{\bar{Q}}

\nc{\ran}{\rangle}
\nc{\lan}{\langle}

\newcommand{\ra}{\rightarrow}

\newcommand{\gs}{\gtrsim}
\newcommand{\one}{\mathbf{1}}

\newcommand{\Ran}{\operatorname{Ran}}

\newcommand{\supp}{\operatorname{supp}}

\renewcommand{\Im}{\mathrm{Im}} 

\newcommand{\im}{{\rm Im}}

\newcommand{\dist}{\mathrm{dist}}

\nc{\bfone}{{\bf 1}}
\nc{\1}{{\bf 1}}

\newcommand{\n}{\nabla}

\newcommand{\DETAILS}[1]{}

\newcommand{\AB}{{\rm{A B\,}}}

\makeatletter
\let\process@citelist\process@citelist@unsorted
\makeatother

\begin{document}
\title[Light-cone structure of  propagation of entanglement]{Light-cone structure of  propagation of entanglement} 

\author[I.~M.~Sigal]{I. M. 
Sigal}
	\address{Department of Mathematics, University of Toronto, Toronto, ON M5S 2E4, Canada }
	\email{im.sigal@utoronto.ca}
	
	\date{\today}

\subjclass[2020]{35Q40 (primary); 35Q94, 81P45, 46N50 (secondary).}
	\keywords{Entanglement; quantum information; quantum evolution; completely positive maps; Lieb-Robinson bound; space-time estimates; maximal velocity estimates, von Neumann evolution, quantum Liouville operators, bipartite systems}	

\maketitle
{
}

\begin{abstract}

For a wide class of bipartite systems with localized couplings, we establish existence of an effective 
 light-cone for propagation of entanglement.   
This result yields a hard lower bound on the time it takes,  under ideal conditions (no loss, no decoherence), to transport 
entanglement to  a distant location (say, a node of a graph-structured quantum network), or to maintain it there.  
  \end{abstract}



\section{Introduction}\label{sec:intro}
\subsection{Background and informal discussion}\label{sec:backgr}
Entanglement is a striking 
quantum phenomenon and a key resource in quantum information science  enabling  quantum communication (see  \cite{Acin, Alba, Amico, EisCrPlen, Flann, Gisin, Goold, Hoke, Horod, Kimble, LiuNoisy, Pir, PlenV, Pres1, Pres2,  RangTak, Sidhu, Terh, Thak, Wehner,  Xav, ZhangScie}, for reviews, and \cite{Haya, NC, OV, Pres, Watr, Wild}, for  books and lecture notes). It is also responsible for the main obstacle in quantum information processing and quantum computing - the decoherence (see \cite{StrelAdPl} for a review).

 \vspace*{.1cm}

 In experiments and applications,  entanglement  is transported by either  photons (common and at large distances) or   particles (ions, electrons, holes, etc.), 
 at distances up to several meters. The latter information transfer mode, of interest here,  has clear advantages at shorter distances by removing an intermediary, which requires additional controls and is vulnerable to loss, and  having incoming carriers  
 integrated directly into active algorithms, 
see \cite{Barrett, Bloch, Bowler, Deuar, Duan, Gross, Hasegawa, Hofstetter, Herrmann, Jaskula, Kheruntsyan, Olmschenk, Peise, Perrin, Rauch, Riebe, Rowe, Wan} for some landmark experiments.

\vspace*{.2cm}

Theoretically, viability of quantum communication is determined by the propagation time of information (modulo an exponentially small leakage)
versus the decoherence time. For this, one needs an estimate of the effective speed of propagation of the former.
 However,  until now, even the finiteness of the propagation speed for entanglement has not been proven. 

\vspace*{.1cm}

 The reason for this disconnect 
 goes to the root of this subject:  the Lieb-Robinson bounds - the common tool in proving the finiteness of speed of propagation in quantum information theory - estimate commutators of propagating observables, while there is no known characterization of entanglement in terms of any system of observables.

\vspace*{.1cm}

In this paper, we present the first proof of the finiteness of speed of propagation of  the particle mediated entanglement, establishing the effective light cone structure for this important  quantum information   resource. 
We also produce an estimate of the maximal propagation speed.
 
\vspace*{.1cm}

In the case of photon transported 
 entanglement, a natural framework for a rigorous treatment is provided by the non-relativistic QED (see e.g. \cite{BoFauSig, GS}), but such a treatment is still missing. 
\vspace*{.1cm}

 We establish the finiteness of speed of propagation of entanglement for bipartite particle systems with localized couplings between their 
 subsystems yielding   the entanglement light-cone structure. Our 
  approach 
 is new and simple. It is  indicated at the end of Subsection \ref{sec:result}.
\vspace*{.1cm}

  The physical set-up we consider is of systems $A$ and $B$ with the state spaces $\mathcal{H}_{A}=L^2(\Lam)$, and $\mathcal{H}_{B}$, respectively, coupled in a subset $Y$ of $\Lam$, where $\Lambda$ is a domain in either $\mathbb R^n$ or $\mathbb Z^n$.  We do not specify the nature of system $B$.   It could be of the same physical nature as $A$ (say, a few-particle system), or an 
 ancilla 
 system  used for intermediate operations  
 and to be discarded later on, or an environment absorbing information (and energy) flow.

\vspace*{.1cm}

The entanglement is tested in a domain $X\subset \Lam$ at time $t$  (see 
 Fig. 1 below).

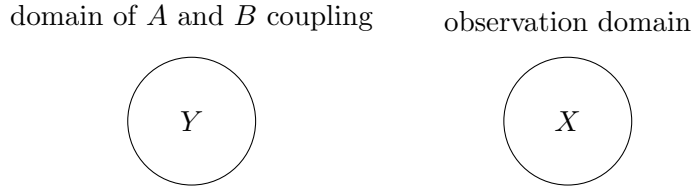
\begin{figure}[H]
    \centering
    \begin{tikzpicture}[
  >=latex,
  every node/.style={font=\small},
  set/.style={draw,circle,minimum size=1.7cm},
  squig/.style={->,decorate,decoration={snake,amplitude=1.5pt,segment length=6pt}}
]

  \node[set] (Y) at (0,0) {};
  \node[set] (X) at (5,0) {};

  \node at (Y) {$Y$};
  \node at (X) {$X$};

  \node[above=2mm of Y] {domain of $A$ and $B$ coupling};
  \node[above=2mm of X] {observation domain};

\end{tikzpicture}
    \caption{\small Physical set-up: entanglement is created at $t=0$ in $Y$ and measured at $t$ in $X$.}
    \label{fig:set-up}
\end{figure}
\noindent 
 Such a set-up is used  in many experiments studying and using entanglement, see e.g. the set of references in the second paragraph on page 1. 

\vspace*{.1cm}

 As usual,  the state space   of the compound (bipartite)  system $AB$ is given by
  \[\mathcal{H}_{AB}=\mathcal{H}_{A}\otimes\mathcal{H}_{B}\]
and the evolution of  $AB$ is provided by the von Neumann equation  for density operators (i.e. positive, trace-class operators of trace $1$) on $\mathcal{H}_{AB}$, \begin{equation}\label{vNE-eqAB}
 \frac{\partial  \G_t}{\partial t}= -i  [H_{\AB},  \G_t],\ \quad  \G_{t=0} =\G_0, 
\end{equation}
with $H_{AB}$ the standard bipartite  Hamiltonian on $\mathcal{H}_{AB}$,  described below. 

Here and in what follows, we use the units with $\hbar=1$,  the speed of light $=1$ and the electron mass $=1$. 

We formulate our main result. It uses    
 the notions of local entanglement, local separability (mod $\kappa\in (0, 1)$),  introduced in the next subsection. 
  The latter notion describes states at the trace norm distance  $\kappa$ from separable ones.   
  Let $d_{XY}$ denote the distance between subsets $X$ and $Y$ of $\Lam$.   
   We show

\begin{theorem}\label{cor:ent-prop1}  Assume Conditions  \eqref{HAHB-cond} - \eqref{I-bndd'} and (AH) below are satisfied and let  $a>0$ be as in Condition (AH), $\mu\in (0, a)$ and $c(\mu)>0$ be defined in \eqref{cmu} below.  
We have

(a) If  at time $t=0$,  
the system $AB$ is separable outside a set $Q\subset \Lam$,  then  it is separable mod $\kappa$ in any set $X\subset Q^c$  for time \[t<d/c,\ d:=\min (d_{XY}, d_{XQ}),\  c> c(\mu),\]  
 where  $\kappa:=O(e^{-2\mu(d-ct)})$, provided  $d$ is sufficiently large. 
 
 (b)  If  at time $t=0$,  
the system $AB$ is entangled in a set $Q\subset \Lam$, with $\G_0$ of  finite rank, 
 then it stays entangled  in $X\supset Q$  for time 
 \[t<d'/c,\ d':=\min (d_{XY}, d_{X^cQ}),\ c>  c(\mu),\] provided  $d'$ is sufficiently large. \end{theorem}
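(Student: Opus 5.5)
We need to guess the definitions. Local separability in X most likely means that the reduced (localized) state, obtained by applying the localization channel for X (restriction to L^2(X), e.g. $\G\mapsto \1_X\G\1_X$ normalized, or a partial trace over the complement), is separable; "mod $\kappa$" means it lies within trace distance $\kappa$ of a separable state. The plan is to compare the true evolution with a decoupled evolution and to control the difference by a maximal velocity (light-cone) estimate for the Liouvillian.

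\textbf{Step 1 (decoupled dynamics).} Write $H_{AB}=H_A\otimes\1+\1\otimes H_B+I$, where the coupling $I$ is localized in $Y$. Let $\G_t^0=e^{-it(H_A+H_B)}\G_0e^{it(H_A+H_B)}$ be the uncoupled evolution. This evolution is a product of local unitaries, so it maps separable states to separable states. Duhamel's formula gives
\[
\G_t-\G_t^0=-i\int_0^t e^{-i(t-s)L_0}[I,\G_s]\,ds ,
\]
where $L_0$ is the free Liouvillian.

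\textbf{Step 2 (light-cone estimate).} Use the maximal velocity estimate available under Condition (AH): for an observable localized in $X$, its Heisenberg evolution under $H_A$ stays within an $O(e^{-\mu(d-ct)})$ neighbourhood of the $ct$-fattening of $X$, with $c>c(\mu)$. In dual form, $\|\1_X e^{-isH_A}\1_Y\|\lesssim e^{-\mu(d_{XY}-cs)}$. Squaring the localization (both sides of the density operator) produces the exponent $2\mu$. Combining this with the Duhamel integral and the boundedness of $I$ from \eqref{I-bndd'}, the $X$-localized parts of $\G_t$ and $\G_t^0$ differ in trace norm by $O(e^{-2\mu(d_{XY}-ct)})$.

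\textbf{Step 3, part (a).} At $t=0$ the state is separable outside $Q$. A second light-cone step shows that the $X$-localized part of $\G_t^0$ equals, up to $O(e^{-2\mu(d_{XQ}-ct)})$, the free evolution of the $Q^c$-localized part of $\G_0$, which is separable. To make this work, decompose $\G_0=\chi\G_0\chi+(\text{terms touching }Q)$ with $\chi=\1_{Q^c}$. The terms touching $Q$ contribute only exponentially small trace norm in $X$, by the same propagation estimate applied to $H_A$. The free evolution of $\chi\G_0\chi$ is separable. Localizing to $X$ then gives a separable state up to an error $\kappa=O(e^{-2\mu(d-ct)})$. Renormalization is harmless once $d$ is large, because the relevant traces stay bounded below.

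\textbf{Part (b).} Entanglement is assumed in $Q\subset X$, with $\G_0$ of finite rank. Finite rank lets me take a finite-dimensional entanglement witness, or use a positive trace-norm distance $\delta>0$ from the closed set of separable states. Here one must check that closedness holds in the finite-rank setting so that $\delta>0$. The argument is then the reverse of (a): up to an exponentially small error in $d'$, the $X$-localized part of $\G_t$ equals a local-unitary image of a state built from the $Q$-localized part of $\G_0$. Local unitaries preserve the distance to the separable set, and the leakage out of $X$ is controlled by $d_{X^cQ}$. Once the errors fall below $\delta$, entanglement persists.

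The main obstacle is Step 2: obtaining trace-norm (not operator-norm) bounds for the localized difference. I would try to get these from Hilbert–Schmidt factorizations $\|\1_X e^{-isH_A}\1_Y\G\|_1\le\|\1_X e^{-isH_A}\1_Y\|\,\|\G\|_1$. A second, more technical point is that $e^{-isH_A}$ does not commute with the localization. This has to be handled by smoothed cutoffs as in the maximal velocity estimates.
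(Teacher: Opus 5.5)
Your part (a) is the paper's argument: Duhamel with $L=L_0+\hat I$, the bound $\|\chi_Xe^{-isH_A}\chi_Y\|\le Ce^{-\mu(d_{XY}-cs)}$ of Theorem~\ref{thm:MVB-al} combined with $\|A\lambda\|_{S_1}\le\|A\|\,\|\lambda\|_{S_1}$, and the split $\G_0=\chi_{Q^c}^\otimes\G_0\chi_{Q^c}^\otimes+(\text{terms touching }Q)$. The first piece stays separable under the free flow and the cut-off $\tilde\chi_X$.

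Two points in your Step 2 are wrong as stated.

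\textbf{Unboundedness of $I$.} $I$ is not bounded: \eqref{I-bndd'} only gives $\|I(H_0+1)^{-1}\|<1$. You therefore need $\|I\G_s\|_{S_1^{AB}}\le\|I(H_0+1)^{-1}\|\,\|(H_0+1)\G_s\|_{S_1^{AB}}$ together with the uniform bound $\|(H_0+1)\G_s\|_{S_1^{AB}}\le C\|(H_0+1)\G_0\|_{S_1^{AB}}$. This is Lemma~\ref{lem:G-t-unif-bnd}: by \eqref{I-bndd'}, $(H_0+1)(H_{AB}+1)^{-1}$ and its inverse are bounded, and $H_{AB}$ commutes with the full flow. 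It requires the extra hypothesis \eqref{Gam0-cond}, which the paper's proof also uses. This, not the sharp cutoffs (already handled inside the cited velocity bound), is the real technical point.

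\textbf{No squaring.} In $\chi_X^\otimes e^{-isH_0}(I\G_s-\G_sI)e^{isH_0}\chi_X^\otimes$ the localization $\chi_Y$ appears on one side only. The same holds for the cross terms $\chi_Q^\otimes\G_0\chi_{Q^c}^\otimes$. So the method gives $\kappa=O(e^{-\mu(d-ct)})$, not $2\mu$. The paper's own estimates \eqref{Gt-est} and \eqref{Gam''t-est} also carry only $\mu$, so the $2\mu$ in \eqref{Gamt-est} is not supported either. Finally, with the paper's unnormalized definition \eqref{ka-ent} no renormalization is needed; under yours, the claimed lower bound on the trace of $\tilde\chi_X(\G_t)$ is not available in general.

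\textbf{The genuine gap is in (b).} It is not true that $\tilde\chi_X(\G_t)$ is, up to a small error, a local-unitary image of something built from $\tilde\chi_Q(\G_0)$. It still contains, with $O(1)$ weight, the free evolution of the part of $\G_0$ on $X\setminus Q$ and of the coherences $\chi_Q^\otimes\G_0\chi_{X\setminus Q}^\otimes$. A margin $\delta$ for $\tilde\chi_Q(\G_0)$ says nothing about such a sum, because $O(1)$ additions can destroy entanglement. For example, $\one+P_\Phi$ on $\C^2\otimes\C^2$, with $\Phi$ a Bell vector, is separable (it is PPT), although $\one+P_\Phi\ge P_\Phi$.

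The missing idea is to undo the flow with a local filter, using that $\G\mapsto(A\otimes B)\G(A\otimes B)^*$ maps the closed cone of separable operators into itself. Take $W:=\chi_Qe^{itH_A}\otimes e^{itH_B}$, so $\|W\|\le 1$, and use $\chi_Qe^{itH_A}\chi_Xe^{-itH_A}=\chi_Q-\chi_Qe^{itH_A}\chi_{X^c}e^{-itH_A}$. Theorems~\ref{thm:Gt-est} and~\ref{thm:MVB-al} then give
\[
\|W\tilde\chi_X(\G_t)W^*-\tilde\chi_Q(\G_0)\|_{S_1^{AB}}\le C\big(e^{-\mu(d_{XY}-ct)}\|(H_0+1)\G_0\|_{S_1^{AB}}+e^{-\mu(d_{X^cQ}-ct)}\big).
\]
Once the right side is below the trace distance $\delta>0$ from $\tilde\chi_Q(\G_0)$ to that cone, $\tilde\chi_X(\G_t)$ cannot be separable. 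Here $\delta>0$ because $\Sigma_{\rm sep}$ is closed by definition, so finite rank is not needed. However, you need $d'-ct\ge R(\delta,\G_0)$, not just $t<d'/c$ with $d'$ large; the same caveat applies to the statement as written.

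This differs from the paper, which goes through Schmidt numbers: the lower bound \eqref{nS-est2'} and Corollary~\ref{cor:SN-prop}. That route relies on the monotonicity $\G_1\ge\G_2\Rightarrow n_S(\G_1)\ge n_S(\G_2)$ of Lemma~\ref{lem:SN-est}, which the same two-qubit example contradicts. The filtered distance argument is therefore the more reliable way to get (b).
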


This theorem is proven in Section \ref{sec:bipart-evol-pf}. Part (a) says that one cannot transport entanglement from $Y$ to $X$ if $X$ lies outside the  forward {\it light cone} 
\begin{align}\label{LC}\Lam_{Y, c}:=\{(x, t): d_Y(x)\le ct\}, \qquad d_Y(x)=d(x,Y),\end{align}
of $Y$ for the speed $c=c(\mu)$ (see Fig. \ref{fig:lcb}). Put differently, it takes at least time $d/c(\mu)$  to transport entanglement from $Y$ to $X$. According to part (b), the system $AB$ retains entanglement in a  domain $Q$ for time at least  $d'/c(\mu)$. 
 To put differently, the entanglement concentrated initially in a set $Q$ stays within the light cone, $\Lam_{Q, c}$ of $Q$.


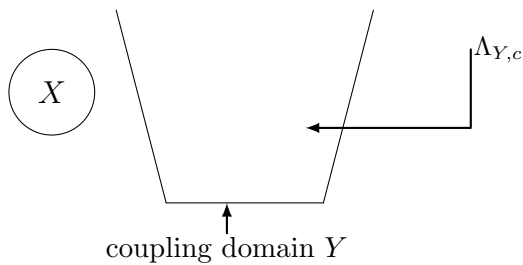
\begin{figure}[H]
\centering
\begin{tikzpicture}[line cap=round, line join=round, scale=0.95]

\tikzset{
  callout/.style={->,>=latex,thick},
  labelbg/.style={fill=white, inner sep=1pt, rounded corners=0.5pt}
}

\coordinate (A) at (0,0);
\coordinate (B) at (2.2,0);
\coordinate (L) at (-0.7,2.7);
\coordinate (R) at ( 2.9,2.7);
\draw (A)--(L);
\draw (B)--(R);
\draw (A)--(B);

\draw (-1.6,1.55) circle [radius=0.6];
\node[labelbg] at (-1.6,1.55) {$X$};

\coordinate (supppt) at (0.85,0);
\coordinate (ypt)    at (1.55,0);

\node[labelbg] (supptxt) at (0.85,-0.65) {\small coupling domain $Y$};
\draw[callout] (supptxt.north) -- (supppt);


\coordinate (coneInt) at (1.95,1.05);

\node[align=left, font=\footnotesize, labelbg] (lamtxt) at (4.65,2.15)
  {$\Lambda_{Y, c}$}; 

\draw[callout] (lamtxt.west) |- (coneInt);

\end{tikzpicture}

\caption{Light cone $\Lambda_{Y, c}$  of $Y$. If a set $X$ lies outside $\Lambda_{Y, c}$, then the probability of entanglement in $X$ is negligible.}
\label{fig:lcb}
\end{figure}

 
To our knowledge, this is the first result on the space-time dynamics of the entanglement.

\vspace*{.1cm}

This  rigorous result, obtained from the first principles, yields a hard lower bound, $T_{\rm min}\ge$ $L/c(\mu)$, on time it takes,  under ideal conditions (no loss, no decoherence), to transport entanglement to a location at distance $L$. 
It informs one about 
the transport   
of entanglement over  
quantum networks 
and sets hard constraints on operating times  
and scalability  in complex quantum devises. 

For comparison, we list recent results on the entanglement dynamics,  \cite{BHV, Cala1, Cala2, Jonay, Keys, KimHuse, Li, Mez, Nah1, Nah2, Nah3, Skinn, Swann, Ton, TranEtal2} where one can also find references to earlier works.

\vspace*{.1cm}

Our results are complementary to  the Lieb-Robinson bounds (LRB) on propagation of correlations of observables  
(\cite{BHV, CEPH,  CL, CLY,  EldredgeEtAl, EisOsb, EpWh, FLS2, Foss, H1, H3, KGE, KuwLem, KuwS1, KuwS2, KVS, LRSZ,   LR, MatKoNaka, NachOgS, NachSchlSSZ, NachS, NSY2,  SHOE, SZ,  TranEtal1, TranEtal2, WaH}).
 Our approach is also complementary  
 to approaches developed the cited works.
(For spin systems and fermi gasses, the LRB is uniform in the states and is of the same form for entangled and separable states. For bose gases, it depends on the space distribution of the states but not on their correlations (see \cite{FLS2, KuwLem, KuwS1, KuwS2, KVS, LRSZ, LRZ}).

In the rest of the introduction, we give a rigorous definition of the local entanglement, describe precisely the model, and formulate our main technical results, Theorem \ref{thm:Gt-est} and \ref{thm:SN-est}, which in particular  imply Theorem \ref{cor:ent-prop1}. After that, we  discuss introduced notion and our results. Theorems \ref{cor:ent-prop1}, \ref{thm:Gt-est} and  \ref{thm:SN-est} 
   are proven in Section \ref{sec:bipart-evol-pf}.  
   The latter section is rather technical but  fairly short and straightforward. In Appendix \ref{sec:tech-lem}, we prove some elementary technical statements. In Appendix \ref{sec:SN-prop}, we prove some elementary properties of the Schmidt number. The latter plays an important role in this paper.
   
In what follows, we measure distance between density operators in terms of  the trace norm  $\|\lambda\|_{S_1^{\AB}}=\Tr_\AB (\lambda^*\lambda)^{\frac{1}{2}}$.  Physical significance of the trace norm is discussed in Remark 2 below.

\vspace*{.2cm}

\subsection{Local entanglement}\label{sec:loc-ent} 

Let $\chi_X$ denote for the characteristic function of $X\subset \Lambda$ and $\one_\#$ be the identity operator operator on the space $\cH_\#,\ \#=A, B$. 
 
 \vspace*{.1cm}
 
 We say that  $\Psi\in\mathcal{H}_{AB}$ is {\it separable} 
 in $X$   iff $(\chi_X\otimes \one_\B)\Psi$ is separable, i.e.

\[(\chi_X\otimes \one_\B)\Psi= \psi \otimes \phi\ \text{ for some }\ \psi \in\mathcal{H}_{A}\ \text{  and }\ \phi \in\mathcal{H}_{B},\]
and is {\it entangled}, otherwise. For $X=\Lam$, this gives the standard definition of entanglement for pure states.

Similarly, we say that a general (pure or mixed) state $\G$ on $\cH_{\AB}$ is  {\it separable}   in $X$ iff $\tilde\chi_X(\G )$, where     
\begin{equation} \label{tilde-chi} 
\tilde\chi_X(\G)={(\chi_X\otimes \one_\B)}\G{(\chi_X\otimes \one_\B)},\end{equation} 
is separable, i.e it is    an incoherent 
 mixture of product states, 
\begin{align}\label{un-ent-do}\tilde\chi_X(\G )= \sum_i p_i\g_{\A}^i\otimes \g_{\B}^i,\end{align}
 for some  weights $p_i>0, \sum_i p_i=1,$ and some  density operators $\g_{A}^i$ and $\g_{B}^i$ of systems $A$ and $B$, or a $S_1^\AB$-limit of such states in the trace norm topology.   
 Correspondingly, a state $\G $ is said to be {\it entangled} in $X$ iff $\tilde\chi_X(\G )$   is entangled, i.e. is not separable.

For correlations and, in particular, for pure states, one replaces \eqref{un-ent-do} by $\tilde\chi_X(\G )=  \g_{A} \otimes \g_{B}$.  
    
\vspace*{.1cm}

 Denote by  $ \Sigma_{\#}$  the set of density operators on $\mathcal{H}_\#$, $\#=\A, \B$, and by $\Sigma_{\rm sep}$ the set of separable  states on $AB$. The latter is defined as   (\cite{Wern, HSW}) 
 \begin{align}\label{sep-set'} \Sigma_{\rm sep} &:= \overline{\text{conv} \{ \g_\A \otimes \g_\B : \g_\A \in \Sigma_\A, \g_\B \in \Sigma_\B \}}, 
 \end{align}
where $\overline{\text{conv} X}$ denotes the closure  in the trace norm topology of the set of all convex combinations of elements of $X$. For a discussion of this space, see Subsection \ref{sec:SN} below.
  
 We say a state $\G$ is  separable mod $\kappa$ 
   in $X$, $\kappa\in (0, 1)$,  iff 
 \begin{align}\label{ka-ent}	\dist_{S_1}^X(\G, \Sigma_{\rm sep})\leq \kappa, \end{align}
where $\dist_{S_1}^X(\G, \Sigma_{\rm sep}):=\inf_{\G'\in \Sigma_{\rm sep}} \frac12 \| \tilde\chi_X(\G-\G')\|_{S_1^{\AB}}$. 
This definition is refined in Remark 4 below.  
   
\vspace*{.1cm}

 Finally, we say that $\G\in S_1^\AB$  
 is  supported in a set $Q\subset \Lam$ iff  $\G =\tilde\chi_Q(\G)$. 
 
 \vspace*{.1cm}

Below, we use the following notation. $\Sigma$ and $\mathcal B$ will denote the spaces of  density operators ($AB$-states) on $\mathcal{H}_{AB}$ and   bounded operators ($A$-observables) on    $\mathcal H_A$;  $\mathcal S_1^\#$ will stand for the Schatten space of trace-class operators on $\mathcal H_\#$ for  $\#=A, B, AB$, see e.g. \cite{Schatten, Sim}. The norms  in $\mathcal S_1^\#$ are denoted  by $\|\cdot\|_{\mathcal S_1^\#}$. $P_\phi=|\phi\ran\lan \phi|$ denotes the orthogonal projection onto the one dimensional space spanned by $\phi$. For positive operators $A$ and $B$, the notation $A \gs B$ signifies that there is a constant $C$, independent of essential parameters, s.t. $A \ge C B$. Recall that we use  the units with $\hbar=1$. Finally, we use the terms `density operators' and `states' interchangeably.

\vspace*{.2cm}

\subsection{Model and results}\label{sec:result}

We  consider  the standard bipartite  Hamiltonian
\begin{align}\label{H-AB}	H_{AB}=	H_{\A}\otimes \one_\B+\one_\A\otimes H_{\B} + I=H_0+I, 
\end{align}
where $H_{0}:=H_{\A}\otimes \one_\B+\one_\A\otimes H_{\B}$. We assume that the Hamiltonians $H_{\A}$ and $ H_{\B}$ of systems $A$ and $B$ and the interaction operator $I$ satisfy the conditions
\begin{align}\label{HAHB-cond}	&H_{\A}, H_{\B}\ \text{  are self-adjoint and non-negative, } H_{\A}, H_{\B}\ge 0;\\
\label{I-loc}	&I=\tilde\chi_Y (I)\ 
	 \text{  for some set }\ Y\subset \Lambda;\\
\label{I-bndd}	&\|(H_{0}+1)^{-1/2}I (H_{0}+1)^{-1/2}\|< 1,\\
\label{I-bndd'}&\|I (H_{0}+1)^{-1}\|< 1. 
\end{align}
  Condition \eqref{I-bndd} is stronger than necessary and is assumed for simplicity. It could be relaxed.


Moreover, let $S_a^n:=\{\zeta= ( \z_1, \dots, \z_n)\in \C^n: |\Im \z_j|<a\ \forall j\}, a>0$. With this,  we assume  that the operator $H\equiv H_{\A}$ satisfies the following condition:

\bigskip

\begin{itemize}
\item[(AH)] The operator-function $H_\xi:=e^{i\xi\cdot x}H e^{-i\xi\cdot x}, \xi \in \mathbb R^n, x\in \Lam$, has an analytic continuation, $H_\zeta$, from $\mathbb R^n$ to $S_a^n$ s.t., $\forall\, \eta\in \R^n$ s.t.   $i\eta\in S_a^n$,  
\begin{equation}\label{cond: AH}\text{ 
$\im H_{i\eta}:=\frac{1}{2i}(H_{i\eta}-H_{i\eta}^*)$ extends to a bounded operator}. 
\end{equation}
\end{itemize}

We discuss  Condition (AH) in Remarks 8 and 9 and Subsection \ref{sec:H} below.  For any bounded operator $B$, we  let $\sup B=\sup_{\psi\in D(B), \|\psi\|=1}\lan \psi, B\psi\ran.$   Due to \eqref{cond: AH}, the following function is well defined:
 \begin{equation}\label{cmu}    c(\mu):= \sup\limits_{\z\in \overline{S_\mu^n}}  \sup\im\ H_{\z}/\mu, \qquad \mu < a. 
\end{equation}
It is shown in \cite{SigWu2} (Proposition B.1) that $c(\mu)>0$, $ \forall \mu\in (0,a)$, provided the Heisenberg (group) velocity does not vanish, $i[H, x]\neq 0$. For a computation of $c(\mu)$, see Remark 9 and Subsection \ref{sec:H}).

\vspace*{.1cm}

\vspace*{.1cm}

Recall the notation $H_{0}:=H_{\A}\otimes \one_\B+\one_\A\otimes H_{\B}$ and let $L_0  \G := -i  [H_{0},  \G]$. We have  

\begin{theorem}\label{thm:Gt-est} Assume Conditions \eqref{HAHB-cond} - \eqref{I-bndd'} and (AH) and let $\G_{t}$ satisfies \eqref{vNE-eqAB}  with an initial condition $\G_{0}$ satisfying   
\begin{align}\label{Gam0-cond}\|H_{0}\G_0 \|_{S_1^{AB}}<\infty. 
\end{align}
 Then, for any $\mu\in (0,a)$ and for any set  $X$ in $\Lambda$ disjoint from $Y$, the evolution $\G_t$ satisfies
\begin{align}\label{Gt-est}
    \|\tilde\chi_X \big( \G_t - e^{t L_{ 0}}\G_0 \big)\|_{S_1^{AB}}\leq Ce^{- \mu(d_{XY}-ct)}\|(H_{0}+1)\G_0 \|_{S_1^{AB}},\end{align}
for any $c>c(\mu)$ and some constant $C=C_{n,c,\mu}>0$ depending on $n,c,\mu$ but {\it independent} of $\G_0$. Here  $c(\mu)$ is given  by~\eqref{cmu}.
\end{theorem}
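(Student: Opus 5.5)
We compare $\G_t$ with the free evolution through Duhamel's formula. Write $L\G:=-i[H_{AB},\G]=L_0\G+L_I\G$ with $L_I\G:=-i[I,\G]$. Then
\begin{align*}
\G_t-e^{tL_0}\G_0=\int_0^t e^{(t-s)L_0}L_I\G_s\,ds=-i\int_0^t e^{-i(t-s)H_0}[I,\G_s]e^{i(t-s)H_0}\,ds .
\end{align*}
By \eqref{I-loc} we have $I=\tilde\chi_Y(I)$, so $[I,\G_s]=(\chi_Y\otimes\one_B)I\G_s-\G_s I(\chi_Y\otimes\one_B)$. The free evolution factorizes as $e^{-i\tau H_0}=e^{-i\tau H_A}\otimes e^{-i\tau H_B}$, and $\chi_X\otimes\one_B$ commutes with the $B$-part. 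Hence it suffices to bound, uniformly in $s$,
\begin{align*}
\big\|(\chi_X\otimes\one_B)\,e^{-i\tau H_0}(\chi_Y\otimes\one_B)\,I\G_s\big\|_{S_1^{AB}},\qquad \tau=t-s,
\end{align*}
together with its adjoint counterpart. This in turn reduces to the operator-norm estimate
\begin{align*}
\|\chi_X e^{-i\tau H_A}\chi_Y\|\le C e^{-\mu(d_{XY}-c\tau)},
\end{align*}
combined with $\|I\G_s\|_{S_1}\le\|I(H_0+1)^{-1}\|\,\|(H_0+1)\G_s\|_{S_1}$ from \eqref{I-bndd'}.

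\textbf{Step 1: the one-body propagation estimate.} We get it from (AH) by the standard Combes--Thomas-type conjugation. Choose a smooth, bounded, Lipschitz weight $f$ with $|\nabla f|\le 1$, $f=0$ on $Y$ and $f\ge d_{XY}-\epsilon$ on $X$; a regularized distance to $Y$, capped, will do. Then
\begin{align*}
\chi_X e^{-i\tau H}\chi_Y=\chi_X e^{-\mu f}\,\big(e^{\mu f}e^{-i\tau H}e^{-\mu f}\big)\,\chi_Y .
\end{align*}
The conjugated group is generated by $H_{f,\mu}:=e^{\mu f}He^{-\mu f}$, which plays the role of $H_{i\eta}$ with a position-dependent $\eta=\mu\nabla f$. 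We then need $\sup\im H_{f,\mu}\le \mu c'$ for some $c'<c$ arbitrarily close to $c(\mu)$. This gives $\|e^{\mu f}e^{-i\tau H}e^{-\mu f}\|\le e^{\mu c'\tau}$ by a Gronwall/energy estimate on $\frac{d}{d\tau}\|\psi_\tau\|^2=2\langle\psi_\tau,\im H_{f,\mu}\psi_\tau\rangle$. Since $\|\chi_X e^{-\mu f}\|\le e^{-\mu(d_{XY}-\epsilon)}$ and $\|\chi_Y\|\le 1$, the bound follows.

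Passing from constant $\eta$, where (AH) applies directly, to a variable weight $f$ is where I expect the main obstacle. If $H$ is a differential or finite-range operator, $\im H_{f,\mu}$ depends only locally on $\nabla f$ and is controlled pointwise by $\sup_{\zeta\in\overline{S^n_\mu}}\im H_\zeta$, up to $O(\mu^2\|\nabla^2 f\|)$ errors that can be absorbed by taking $f$ slowly varying and $c>c(\mu)$. In general I would approximate $f$ by piecewise linear weights, or use a partition of unity with commutator estimates. An alternative is to prove the estimate for the half-space case $f=\eta\cdot x$ restricted appropriately. This is the step where the strict inequality $c>c(\mu)$ and the dependence of $C$ on $n,c,\mu$ enter.

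\textbf{Step 2: a priori bounds on the full evolution.} We need $\|(H_0+1)\G_s\|_{S_1}\lesssim\|(H_0+1)\G_0\|_{S_1}$ uniformly in $s$. Using \eqref{I-bndd} and \eqref{I-bndd'}, $H_0+1$ and $H_{AB}+1$ are comparable: $H_{AB}+1\ge(1-\delta)(H_0+1)$, and $\|(H_{AB}+1)(H_0+1)^{-1}\|<2$. Because $e^{-itH_{AB}}$ commutes with $H_{AB}$, we get
\begin{align*}
\|(H_{AB}+1)\G_s\|_{S_1}=\|(H_{AB}+1)\G_0\|_{S_1}.
\end{align*}
This transfers to $H_0$ with constants, using $\|(H_0+1)(H_{AB}+1)^{-1}\|<\infty$, which follows from \eqref{I-bndd'} via a Neumann series.

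\textbf{Step 3: integration in time.} Integrating the Step 1 bound over $s\in[0,t]$ gives
\begin{align*}
\int_0^t e^{-\mu(d_{XY}-c(t-s))}\,ds\le(\mu c)^{-1}e^{-\mu(d_{XY}-ct)} .
\end{align*}
Collecting the constants yields \eqref{Gt-est}. The second term of the commutator, $\G_s I(\chi_Y\otimes\one_B)$, is handled the same way. We take adjoints and use that $\G_s$ is self-adjoint, together with $\|\G_sI\|_{S_1}=\|I\G_s\|_{S_1}$; this identity needs the domain bound from Step 2, in the form $\|\G_s(H_0+1)\|_{S_1}=\|(H_0+1)\G_s\|_{S_1}$.
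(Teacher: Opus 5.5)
Your Duhamel reduction matches the paper's proof and is correct. This covers the use of $I=\tilde\chi_Y(I)$ and of $e^{-i\tau H_0}=e^{-i\tau H_A}\otimes e^{-i\tau H_B}$ to reduce everything to $\|\chi_X e^{-i\tau H_A}\chi_Y\|$. It also covers the a priori bound on $\|(H_0+1)\G_s\|_{S_1^{AB}}$ via $(H_{AB}+1)$ and $\|(1+I(H_0+1)^{-1})^{\pm1}\|<\infty$ (this is Lemma~\ref{lem:G-t-unif-bnd}), the adjoint trick for $\G_s I$, and the final time integration.

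The gap is Step~1. The paper does not prove the one-body bound $\|\chi_X e^{-i\tau H_A}\chi_Y\|\le Ce^{-\mu(d_{XY}-c\tau)}$; it imports it as Theorem~\ref{thm:MVB-al}, the maximal velocity bound of Sigal--Wu. Your sketch does not replace that result.

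\begin{itemize}
\item \textbf{What (AH) gives.} Condition (AH) controls only $e^{\mu\eta\cdot x}He^{-\mu\eta\cdot x}=H_{-i\mu\eta}$ for \emph{constant} $\eta$. The key inequality $\sup\Im H_{f,\mu}\le\mu c'$ for a variable weight $f$ does not follow from it.
\item \textbf{Locality argument.} This works for differential or finite-range $H$. It does not work for the operators the theorem targets, such as $\omega(p)$ with $\omega(k)=\sqrt{|k|^2+m^2}$, or lattice hopping with exponentially decaying but infinite-range $t_{xy}$. For these, $e^{\mu f}He^{-\mu f}$ is not determined pointwise by $\nabla f$. Bounding it needs a symbol calculus or commutator estimates that (AH) does not supply.
\item \textbf{Partition of unity.} The cross terms require exactly those commutator estimates, so this route has the same problem.
\item \textbf{Half-space route.} This gives the bound only when $X$ and $Y$ are separated by a slab. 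Take $n\ge 2$, $Y$ a ball of radius $R$ and $X=\{|x|\ge R+d_{XY}\}$. Covering $X$ by half-spaces at distance close to $d_{XY}$ from $Y$ needs a number of pieces growing polynomially in $R$. The prefactor then depends on the geometry of $X,Y$, not only on $n,c,\mu$ as \eqref{Gt-est} requires.
\end{itemize}

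Invoke Theorem~\ref{thm:MVB-al} at this point, as the paper does, and your proof is complete and coincides with the paper's. Otherwise Step~1 remains an unproved claim under the stated hypotheses.
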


This theorem is our first main technical result. It is proven in Section \ref{sec:bipart-evol-pf}. We use it there  to prove Theorem \ref{cor:ent-prop1}.  The proof of  Theorem \ref{cor:ent-prop1}(a) is rather direct, while the one of  Theorem \ref{cor:ent-prop1}(b) is more delicate. Here, we use 
  the notion of Schmidt number (\cite{HorTerh, Shir}),  $n_S(\G)$, defined in Subsection \ref{sec:SN}. For now, we mention only that it extends the notion of  Schmidt  rank (SR),  $r_S(P_\Psi)$, from pure to mixed states
while retaining the key property of the SR: 
\begin{align}\label{SR-criter} \text{pure state }  P_\Psi\ \text{ is entangled} \iff r_S (\Psi)>1.\end{align}

\vspace*{.1cm}

 We say that  a state $\G \in\mathcal{S}_1^{\AB}$  has  the SN $k$  in a domain $X$ if and only if the (not normalized) state $\tilde\chi_X(\G )$  
 has  the SN $k$. We have

\vspace*{.1cm}

\begin{theorem}\label{thm:SN-est} Let the conditions of Theorem \ref{thm:Gt-est}
 be satisfied and  let $\G_{t}$ and $\G_{0}$ be as in Theorem \ref{thm:Gt-est}. Assume  $\G_0$ has the SN $k$ in some domain $Q\subset \Lam$ and is of a finite rank.    Then, for any $X\supset Q$,  \begin{align}\label{SN-est}n_S (\tilde\chi_X(\G_t)) \ge k, \end{align}  
for the time $t<d'/c(\mu)$, with $d'=\min (d_{XY}, d_{X^cQ})$, provided $d'\gg1$. 
 \end{theorem}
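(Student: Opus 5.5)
The plan is to compare $\tilde\chi_X(\G_t)$ with a reference state whose Schmidt number in $X$ is exactly $k$, and then show the Schmidt number cannot drop under a small trace-norm perturbation. There are three steps.

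\textbf{Step 1: reduce to the free evolution.} Since $\G_0$ has finite rank and satisfies \eqref{Gam0-cond}, Theorem \ref{thm:Gt-est} applies. It gives
\[
\|\tilde\chi_X(\G_t-e^{tL_0}\G_0)\|_{S_1^{AB}}\le Ce^{-\mu(d_{XY}-ct)}\|(H_0+1)\G_0\|_{S_1^{AB}},
\]
which is small for $t<d'/c$ when $d'\gg1$. Note that $X\supset Q$ does not by itself make $X$ disjoint from $Y$. I will assume this (via $d_{XY}>0$) or apply the theorem to $X\setminus Y$.

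\textbf{Step 2: control the free evolution.} Write $e^{tL_0}\G_0=(e^{-itH_A}\otimes e^{-itH_B})\G_0(\cdots)^*$. This is conjugation by a local unitary, so its Schmidt number is still $k$. It remains to localize it back to $X$. Since $\G_0=\tilde\chi_Q(\G_0)$ on the relevant part, I will apply the single-particle propagation estimate underlying Theorem \ref{thm:Gt-est}, namely the Combes--Thomas type bound
\[
\|\chi_{X^c}e^{-itH_A}\chi_Q\|\lesssim e^{-\mu(d_{X^cQ}-ct)},
\]
which follows from (AH) and \eqref{cmu}. This shows $\tilde\chi_X(e^{tL_0}\G_0)$ is within $O(e^{-\mu(d'-ct)})$ in trace norm of the locally rotated state $\G_0^t:=e^{tL_0}\G_0$, whose Schmidt number is exactly $k$. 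The finite rank of $\G_0$ is used here to pass from operator-norm bounds to trace-norm bounds, at the cost of a rank-dependent constant.

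\textbf{Step 3: stability of the Schmidt number.} This is the main obstacle. I need that the set of states with Schmidt number $\le k-1$ is closed and that a finite-rank state with Schmidt number exactly $k$ sits at positive trace distance $\delta$ from it. With $\delta$ fixed, the bound $\|\tilde\chi_X(\G_t)-\G_0^t\|_{S_1}<\delta$ then forces $n_S(\tilde\chi_X(\G_t))\ge k$.

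The difficulty is that $\G_0^t$ moves with $t$, so $\delta$ must be uniform in $t$. I would handle this in one of two ways. The first is to use invariance of trace distance and Schmidt number under local unitaries, which reduces everything to the fixed state $\G_0$ with fixed $\delta=\dist(\G_0,\{n_S\le k-1\})>0$. The second is to use a Schmidt-number witness $W$ built for $\G_0$, a bounded operator with $\Tr(W\sigma)\le 0$ for all $n_S(\sigma)\le k-1$ and $\Tr(W\G_0)>0$, and transport it by the local unitary. The positivity of $\delta$ comes from closedness of the Schmidt-number-$(k-1)$ cone (Appendix \ref{sec:SN-prop}), and this is where finite rank is needed: it keeps the relevant sets effectively finite-dimensional and compact.

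Finally, combining Steps 1--3, I would choose $d'$ large enough that the total error $O(e^{-\mu(d'-ct)})$ is less than $\delta$ for all $t<d'/c$ with $c>c(\mu)$. One more issue is normalization, since $\tilde\chi_X$ produces unnormalized states. Because the Schmidt number is scale-invariant, I would normalize, noting that $\Tr\tilde\chi_X(\G_t)$ stays bounded below by $\Tr\G_0-O(\text{error})$.
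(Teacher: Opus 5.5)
\textbf{The gap is in Step 2.} The hypothesis says only that $\tilde\chi_Q(\G_0)$ has Schmidt number $k$. It does not say that $\G_0$ is supported in $Q$; this is why the paper's proof has to deal with the $\chi_{Q^c}^\otimes\G_0\chi_{Q^c}^\otimes$ part. You claim that $\tilde\chi_X(e^{tL_0}\G_0)$ is $O(e^{-\mu(d'-ct)})$-close in $S_1^{AB}$ to $e^{tL_0}\G_0$, and that the latter has Schmidt number exactly $k$. Both claims rely on ``$\G_0=\tilde\chi_Q(\G_0)$ on the relevant part'', which you are not given. Any weight of $\G_0$ in $X^c$, or in $X\setminus Q$ within distance $ct$ of $X^c$, is not controlled by $\|\chi_{X^c}e^{-itH_A}\chi_Q\|$. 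For example, take $\G_0=\frac12P_{\Psi_1}+\frac12P_{\Psi_2}$ with $\Psi_1=\chi_Q^\otimes\Psi_1$ of Schmidt rank $k$ and $\Psi_2=\chi_{X^c}^\otimes\Psi_2$. Then already at $t=0$ one has $\|\G_0-\tilde\chi_X(\G_0)\|_{S_1^{AB}}=\frac12$, however large $d'$ is. In general one also only has $n_S(\G_0)\ge k$, not equality. So Steps 2--3 compare $\tilde\chi_X(\G_t)$ with the wrong reference state, and as written they prove the theorem only under the extra assumption $\G_0=\tilde\chi_Q(\G_0)$. Separately, your last step is off: the error is small only when $d'-ct\gg1$, not uniformly for all $t<d'/c$. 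The statement itself is loose on this point.

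\textbf{How to repair it.} Keep your Step 3, but apply it to the fixed state $\tilde\chi_Q(\G_0)$, reached by a local operation. Set $W_t:=(\chi_Q e^{itH_A})\otimes e^{itH_B}$. Product operators do not increase the Schmidt number; this is \eqref{SN-est1}, and its proof works verbatim for $A\otimes B$. Hence $n_S(\tilde\chi_X(\G_t))\ge n_S(W_t\tilde\chi_X(\G_t)W_t^*)$. Next, $W_t\chi_X^\otimes e^{-itH_0}=\chi_Q^\otimes-R_t$ with $R_t:=\chi_Q^\otimes e^{itH_0}\chi_{X^c}^\otimes e^{-itH_0}$. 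By Theorem \ref{thm:MVB-al} applied to the disjoint sets $X^c$ and $Q$, $\|R_t\|=\|\chi_{X^c}e^{-itH_A}\chi_Q\|\le Ce^{-\mu(d_{X^cQ}-ct)}$. Combining this with Theorem \ref{thm:Gt-est} gives
\[
\bigl\|W_t\tilde\chi_X(\G_t)W_t^*-\tilde\chi_Q(\G_0)\bigr\|_{S_1^{AB}}\le 3\|R_t\|+\bigl\|\tilde\chi_X(\G_t-e^{tL_0}\G_0)\bigr\|_{S_1^{AB}},
\]
and this holds wherever the mass of $\G_0$ sits. The reference state is now nonzero and independent of $t$. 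Normalizing by $\Tr\tilde\chi_Q(\G_0)>0$ and using closedness of the set of states with $n_S\le k-1$ (Shirokov) finishes the proof. Your uniformity-in-$t$ issue disappears. The finite rank is also not needed to pass from operator-norm to trace-norm bounds, since $\|A\G B\|_{S_1}\le\|A\|\,\|\G\|_{S_1}\|B\|$.

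\textbf{Comparison with the paper.} The paper treats the $Q^c$ part through the operator inequality $\G_0\ge\G_0'-\G_0''$ and then Corollary \ref{cor:SN-prop}, i.e.\ stability of $n_S$ under operator lower bounds. That corollary rests on the monotonicity \eqref{SN-est2}, $\G_1\ge\G_2\Rightarrow n_S(\G_1)\ge n_S(\G_2)$, and this monotonicity is false. For instance, $\frac14\one_{\C^2\otimes\C^2}$ is separable yet dominates $\frac14P_\Phi$ for a Bell vector $\Phi$. So your mechanism (trace-norm closeness plus closedness of Schmidt classes), once given the right reference state, is the sound one.
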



This theorem is our second main technical result. It is proven in Section \ref{sec:bipart-evol-pf}. Theorem \ref{cor:ent-prop1}(b) follows from Theorem \ref{thm:SN-est} and the following result proven  (following results of  \cite{HSW, Shir}) in the next subsection. 
\begin{theorem}\label{thm:SN-criter}  
\begin{align}\label{SN-criter-fr} \G\ \text{ is entangled} \iff n_S (\G)>1;\ \G\ \text{ is separable } \iff  n_S (\G )=1.\end{align}\end{theorem}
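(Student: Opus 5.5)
The plan is to reduce both equivalences to a single structural statement: $n_S(\G)=1$ holds exactly when $\G$ lies in $\Sigma_{\rm sep}$, as defined in \eqref{sep-set'}. Since $n_S\ge 1$ for every nonzero positive trace-class operator, the first equivalence in \eqref{SN-criter-fr} is then just the contrapositive of the second. I will use the definition of the Schmidt number from Subsection \ref{sec:SN}, following \cite{Shir}, in its infinite-dimensional form. There, $n_S(\G)$ is the least $k$ such that $\G$ belongs to the trace-norm closure of the convex hull of the pure states $P_\Psi$ with $r_S(\Psi)\le k$. In the infinite-dimensional setting one also has to allow such closures, or the corresponding probability measures on pure states.

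\emph{Separable implies $n_S=1$.} Each product $\g_\A\otimes\g_\B$ is handled by spectral decomposition. Writing $\g_\A=\sum_j a_jP_{\psi_j}$ and $\g_\B=\sum_l b_lP_{\phi_l}$, one gets $\g_\A\otimes\g_\B=\sum_{j,l}a_jb_lP_{\psi_j\otimes\phi_l}$. This series converges in $S_1^{\AB}$, and each term is a pure state of Schmidt rank $1$. Convex combinations of such products, and their trace-norm limits, therefore stay in the closed convex hull of Schmidt-rank-one pure states, so $n_S(\G)\le 1$. For non-normalized operators such as $\tilde\chi_X(\G)$, I first normalize by the trace; the cone structure makes this harmless.

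\emph{$n_S=1$ implies separable.} If $n_S(\G)=1$, then $\G$ is a trace-norm limit of convex combinations of $P_{\psi\otimes\phi}=P_\psi\otimes P_\phi$. Each such combination already lies in the set defining $\Sigma_{\rm sep}$, so $\G\in\Sigma_{\rm sep}$. The one subtlety is the version of the definition in which $n_S$ is given through measures, i.e. $\G=\int P_\Psi\,d\pi(\Psi)$ with $\pi$ supported on rank-one vectors. In that case I would invoke the result of \cite{HSW}: $\Sigma_{\rm sep}$ coincides with the set of barycenters of probability measures on the product pure states, and any such barycenter is a trace-norm limit of finite convex combinations. The approximation is obtained by partitioning the support of $\pi$ into small pieces and using uniform continuity of $\Psi\mapsto P_\Psi$ in $S_1$ on norm-bounded sets.

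The pure-state consistency check \eqref{SR-criter} then follows as a special case. For $\G=P_\Psi$, extremality of pure states forces every approximating decomposition to concentrate near $\Psi$. This gives $n_S(P_\Psi)=r_S(\Psi)$, because the set of vectors of Schmidt rank at most $k$ is closed.

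I expect the main obstacle to be the infinite-dimensional closure issue: showing that the closure in the definition of $n_S$ and the closure in \eqref{sep-set'} describe the same set. This is also where the boundary of the definitions matters, since the convex hull of Schmidt-rank-one pure states is not closed in infinite dimensions. To handle it I will rely on \cite{Shir} for the characterization of the Schmidt class $\mathfrak S_1$ as a closed convex set, and on \cite{HSW} to identify it with $\Sigma_{\rm sep}$.
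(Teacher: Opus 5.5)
Your argument is correct, but it is organized differently from the paper's.

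The paper works with the measure definition \eqref{nS}. It first proves $\Sigma_1=\cB(M(\cP_1))$ by invoking subadditivity and lower semicontinuity of $n_S$, which make $\Sigma_1$ closed and convex with extreme points $\cP_1$, together with the Choquet-type theorem of \cite{HSW} (Proposition 3 of \cite{Shir}). Only then does it use $\cP_1=\cP_{\rm sep}$ and \eqref{sep-set}.

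You instead compare the two sets directly. Under Shirokov's closure definition, $\overline{\mathrm{conv}}\,\cP_1=\Sigma_{\rm sep}$ follows from your two elementary inclusions: pure products are products of density operators, and each $\g_\A\otimes\g_\B$ is an $S_1$-convergent convex series of pure products. This is just \eqref{sep-set''} and needs no Choquet theory.

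Be aware that the definition actually used in Subsection \ref{sec:SN} is the measure one, so your ``subtlety'' paragraph is the main case, not a side remark. There you should say explicitly that the infimum in \eqref{nS} is attained, because $r_S$ is integer-valued. Hence $n_S(\G)=1$ iff some $\mu\in M_\G$ is concentrated on the closed set $\cP_1$. This makes $\Sigma_1=\cB(M(\cP_1))$ true by definition, so the paper's preliminary step is unnecessary. The only nontrivial input is then \eqref{sep-set} from \cite{HSW}, which gives separable $\Rightarrow n_S=1$; your discretization argument gives the converse.

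What your route buys:
\begin{itemize}
\item It is shorter.
\item It pinpoints exactly where infinite-dimensional Choquet theory enters: only in passing between the closure and measure formulations.
\item It does not rely on the compactness of $\Sigma_1$ asserted in the paper's proof, which fails in the trace-norm topology when $\cH_\AB$ is infinite-dimensional.
\end{itemize}
The paper's route, in exchange, is an instance of Shirokov's general argument for the Schmidt classes $\{n_S\le k\}$, so it extends verbatim to all $k$.
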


   \vspace*{.1cm}

  Theorems \ref{thm:SN-est} and \ref{thm:SN-criter} imply  Theorem \ref{cor:ent-prop1}(b).
  
The proof  of Theorem \ref{thm:Gt-est} is based on the Duhamel principle and a bound on Schr\"odinger propagators of \cite{SigWu}. The proof  of Theorem \ref{thm:SN-est} uses, besides Theorem \ref{thm:Gt-est},
properties of the Schmidt number proven in Appendix \ref{sec:SN-prop}. 

\noindent{\it Discussion.} By \cite{BHV}, exponentially small leakages 
  outside the light cone carry at most exponentially small amount of information.


\subsection{Schmidt number} \label{sec:SN}

In this subsection, we define the Schmidt number (SN) $n_S$ and prove Theorem \ref{thm:SN-criter}. To ease the reading, we first define the SN in  
the finite-rank case.

  Recall, $\cP:=\{P_{\Psi}: \Psi\in \cH_\AB\}$,  the metric space  of pure states, $P_\Psi=|\Psi\ran\lan \Psi|$.   Let $M$ denote the set of positive functions, $\mu:\cP\ra [0, 1]$, with countable  supports 
  in $\cP$, satisfying $\sum_\psi \mu(\psi)=1$, i.e. $\mu$ are discrete probability distributions. 
For such a function $\mu\in M$, we define the map $\cB_{\rm fr}: M \ra \Sigma $ (here `fr' stands for the finite rank) by  
\begin{align}\label{Gam-mu-fr}\cB_{\rm fr}: \mu\ra \G(\mu) := \sum_{\Psi\in \mathcal P} \mu(\Psi) P_{\Psi}. \end{align}

\vspace*{.1cm}

  Now, for a state $\G$ of finite rank, 
   we define the Schmidt number  as   (see  \cite{HorTerh, Shir})
\begin{align}\label{SN-fr}
n_S^{\rm fr} (\G) := \inf_{\mu\in M_\G}  \operatorname*{sup}_{\Psi\in \supp\mu} r_{S}(\Psi), \end{align}
where $M_\G:=\{\mu\in M : \cB_{\rm fr}(\mu)=\G\}=\cB^{-1}_{\rm fr}(\G).$ If $\G=P_{\Psi}$, then $\supp \mu=\{\Psi\}$ and $n_S^{\rm fr} (\G)=r_S(\Psi)$.

 \vspace*{.1cm}

 In general case, we let $M(\Om)$ denote the set of Borel probability measures on a metric space $\Om$.  For a measure  $\mu\in M(\mathcal P)$,  one defines (instead of \eqref{Gam-mu-fr}) the map $\cB: M(\mathcal P)\ra \Sigma$ by  (\cite{HSW}) 
\begin{align}\label{Gam-mu}\cB: \mu\ra \G(\mu) := \int_{\mathcal P} P_{\Psi}\, d\mu(\Psi), 
\end{align}
where $P:\Psi\ra P_{\Psi}$ and the integral is understood in the {Bochner} sense. (Here, we write $d\mu(\Psi)$ for $d\mu(P_{\Psi})$.) 
The map $\cB$ 
is many-to-one. Since  the space $\Sigma$ is compact and convex with the set of extreme points given by the space $\cP$,  by a version of Choquet theorem, due to \cite{HSW}, $\cB$ is onto: $\Sigma =\cB(M(\mathcal P))$. 

If $\G=\sum_i p_i P_{\Psi_i}$, then $\mu =\sum_i p_i \delta_{\Psi_i}$ (the atomic measure supported on $\{\Psi_i\}$).
 
   \vspace*{.2cm}

  Now, for a (mixed) state $\G$, one defines the Schmidt number  as   (see  \cite{Shir}) 
  \begin{align}\label{nS}
n_S(\G) := \inf_{\mu\in M_\G}  \operatorname*{sup}_{\Psi\in \supp\mu} r_{S}(\Psi), 
\end{align}
where $M_\G:=\{\mu\in M(\cP): \G(\mu)=\G\}=\cB^{-1}(\G).$ Here, we use the notation 
\[
\operatorname*{sup}_{\psi\in \supp\mu} f(\psi):= \inf\big\{ M \ge 0 :
f(\psi)\le M \text{ for $\mu$-almost all } \psi\big\}.\]
Set  $n_S(\G)=0$ for $\G =0$.  Note that $n_S(P_{\Psi})=r_S(\Psi)$.

   \vspace*{.1cm}
  In conclusion of these preliminaries, we observe that the space $\Sigma_{\rm sep}$ is compact and convex with the set of extreme points given by the space $\cP_{\text{sep}}:=\{P_{\Psi}: \Psi\in \cH_\AB {\text{ is separable}}\}$ of pure separable  states, $P_\Psi$, on $AB$. By the Krein-Milman 
  theorem, we have
\begin{align}\label{sep-set''} \Sigma_{\rm sep} = \overline{\text{conv}\ \cP_{\text{sep}}}. \end{align} 
Moreover, let  $M_\AB$  be the set of   {Borel probability measures on  the 
  space}  $\cP_{\text{sep}}$.  
It is shown in \cite{HSW} 
 that,  like $\Sigma =\cB(M(\mathcal P))$, we have 
 \begin{align}\label{sep-set} 
 \Sigma_{sep}  &= \{\int_{\cP_{\text{sep}}}  P_{\Psi} \, d\mu(\Psi),\ \forall \mu \in M_\AB\}  =\cB(M(\mathcal P_{\text{sep}})). 
 \end{align}

\begin{proof}[Proof of Theorem \ref{thm:SN-criter}]  
  
   \vspace*{.1cm}

 Define the first Schmidt class (\cite{HSW}) $\Sigma_1:=\{\G\in S_1^\AB: n_S(\G)\le 1\}$ and let $\cP_1:=\{P_{\Psi}: \Psi\in \cH_\AB, r_S(\Psi)\le1\}$.  Since the SN $n_S$ is subadditive  and lower semi-continuous (by  Proposition 1 of  \cite{Shir}, see also Proposition \ref{prop:SN-prop} of Appendix \ref{sec:SN-prop} for an elementary proof for the finite rank case),  the set $\Sigma_1$ is convex  and closed (and therefore, compact) and, as it is not difficult to check, the set of extreme states of $\Sigma_1$ is  $\cP_1$, which is also closed by the fact that the rank $r$ is also  
 lower semi-continuous.   
  Hence,  by an enhanced version of the  Choquet  theorem, due to \cite{HSW},  we have  (see  Proposition 3 of  \cite{Shir})  
  \begin{align}\label{Sig1-M1} \Sigma_1 =\cB(M(\mathcal P_1)).
 \end{align}

   \vspace*{.1cm}

  Since $\Sigma=\{\G \in S_1^\AB: \G\ge 0, \Tr\G=1\}$, we have  $\Sigma_1 =\{\G\in \Sigma: n_S(\G)=1\}$ and $\cP_1 =\{P_{\Psi}: \Psi\in \cH_\AB, r_S(\Psi)=1\}$ and therefore, by Eq. \eqref{SR-criter},  
   $\cP_{\text{sep}}=\cP_{1}$. Thus, by Eqs \eqref{sep-set} and \eqref{Sig1-M1},  
 $\Sigma_{\rm sep}=\Sigma_1$, which implies Theorem \ref{thm:SN-criter}. 
\end{proof}
 
\subsection{Remarks}\label{sec:rem} 
1.  The system $B$ could be of infinite degrees of freedom and in the thermodynamic regime, but, for simplicity, we consider only normal states of $AB$, i.e. those given by density operators on $\cH_\B$. 
\vspace*{.1cm}

2.  The trace norm bounds the expectations for all observables as follows from the relation $\|\lambda\|_{\mathcal S_1}=\sup_{ 
\|A\|=1} |\Tr(A\lambda)|$. Consequently, one can interpret the trace distance $\|\lambda-\s\|_{\mathcal S_1}$ as measuring the physical distinguishability of states.

\vspace*{.1cm}

 3.  In  \eqref{ka-ent}, the distance of a given density operator $\G$ to the set  $\Sigma_{\rm sep}$ is measured in the trace norm. 
It implies also a bound on the Bures distance $D(\rho,\sigma):=\sqrt {2(1-F(\rho ,\sigma ))}$ {(which is an extension of the Fubini-Study distance to mixed states)} defined by  the fidelity, $F(\rho,\sigma):=\|\sqrt{\rho}\sqrt{\sigma}\|_{1}$, which has the important property of monotonicity under local operations and classical communication (LOCC)  
 (see e.g. \cite{FaRe}, Appendix B, \cite{Horod} and\cite{NC}). 

\vspace*{.1cm}

   4. 
 For a  further refinement of the definition of the  $\kappa$-separable state given above, we   say that $\G$ is  entangled in $X$ at the space-time correlation length $1/\mu$ and speed $c>0$ iff \eqref{ka-ent} holds with $\kappa= O(e^{- \mu(d_{XY}-ct)})$.  

\vspace*{.1cm}

5. 
 Local entanglement and  separability are not binary properties. There are properties in between, like  semi-localization mentioned after Eq. \eqref{Gam''}.

 \vspace*{.1cm}

6. One can apply the standard entanglement measures to space-localized states, $\G_X:=\tilde\chi_X(\G)/\Tr(\tilde\chi_X(\G))$, to obtain  space-local  entanglement measures,  $E_X(\G):=E (\G_X)$.

\vspace*{.1cm}

7.  We say that the operator-function $H_\xi:=e^{i\xi\cdot x}He^{-i\xi\cdot x}, \xi \in \mathbb R^n, x\in \Lam$, has an analytic continuation from $\mathbb R^n$ to $S_a^n$ if the operators $H_\xi, \xi \in \mathbb R^n$,  have the common domain $\mathcal D(H)$ 
 and, $\forall \psi\in \mathcal D(H),$ the vector function $\xi\to H_{\xi}\psi $,  from $\mathbb R^n$ to $\cH_\A$, is  analytic in $S_a^n$.

\vspace*{.1cm}

8. 
 Condition (AH) implies  $H$ is has the bounded group velocity $i [H, x]$. 
Indeed,  $\n_{\Im \z}\Im H_\z\big|_{\z=0} = i  [H, x]$ and therefore, since $\Im H_\z$ is a bounded harmonic operator - function on $S_a^n$, we have $\|i  [H, x]\|\le C\sup_{\z\in S_{b}^n} \|\Im H_\z\|, b< a$. 

In the opposite direction, it is shown in \cite{SigWu2} (Lemma B.2) that, for small $\mu$,   
\begin{equation}\label{c0}
c(\mu)=\sup\limits_{\hat \eta\in S^{n-1}} \sup \big(\hat\eta\cdot (i[H,x])\big) +O(\mu).
\end{equation}
This allows to compute lower bounds on $c(\mu)$, see Subsection \ref{sec:H}.

\vspace*{.1cm}

9. The results above could be extended to differentiable $H_\xi$, rather than analytic, with the proviso that the exponential tails (spillover) are replaced by those with power decay (long-range models). For this it suffices to use results of \cite{SigWu}, Appendix A, instead of Theorem \ref{thm:MVB-al} below.

\vspace*{.1cm}

10. By Lemma \ref{lem:adj-norm} of Appendix \ref{sec:tech-lem},  \eqref{Gam0-cond} implies that $\G_0 H_{0}$ extends to a trace-class operator:
\begin{align}\label{Gam0-cond'}\|\G_0 H_{0} \|_{S_1^{\AB}}<\infty. 
\end{align}

\vspace*{.1cm}

11. 
Eqs \eqref{Gam-mu} and \eqref{sep-set}  
 show that for any separable state $\G$, every measure $\mu$ associated with it is supported in $\cP_{\rm sep}$, exactly as a classical probability measure would be. As a result,  statistical characteristics 
 of $\G\in \Sigma_{\rm sep}$ can be reproduced by classical means and realized by  a hidden variables model (\cite{Wern}). Correlations of  separable states are said to be classical.

\vspace*{.1cm}

\subsection{Classes of Hamiltonians $H$ and estimates of $c(\mu)$}\label{sec:H} 
Condition~(1) is satisfied for quantum  Hamiltonians of the form
 \begin{align} \label{H}H=T + V,\end{align} acting on $L^2(\Lam)$ or $\ell^2(\Lam)$, depending on $\Lam$, where  $T$ is a self-adjoint operator described below and $V$ is the multiplication operator by a real function $V(x)$ s.t. $H$ is self-adjoint.
 
For $\Lambda=\mathbb Z^n$ (the tight binding approximation), 
  $T$ is a symmetric operator, with exponentially decaying matrix elements, $  |t_{{x, y}}|\leq Ce^{-a|{ x - y}|},$ for some $a>0,$ and $V(x)$ is a bounded function.

For $\Lambda=\mathbb R^n$,  
one can take $T=\om(p)$,  where $\om(k)$ is a real, smooth, positive function on $\mathbb R^n,$ $p:=-i\n$ is the momentum operator 
  and  the potential $V(x)$ is real and s.t.  $H$ is self-adjoint on the domain of $\om(p)$, e.g.  
$V$ is  $\om(p)$-bounded with the relative bound $<1$, i.e.
\begin{align} \label{V-cond}
& \exists \,0\le a <1,\ b>0: \quad
\|Vu\|\le a  \| \om(p) u\|+ b\|u\|,
\end{align}
where $\|\cdot\|$  denotes the norm in $\mathcal H^A=L^2(\Lam)$. 
Operator~\eqref{H} on $L^2(\R^d)$, with $T=\om(p)$, satisfies (AH) if the function $\omega(k)$ has an analytic continuation, $\omega(\zeta),$ from $\mathbb R^n$ to $S_a^n$ and $\im\, \omega(i\eta)$ is a bounded function $\forall i\eta\in S_a$. 

\vspace*{.1cm}

An important example of $\omega(k)$ is the relativistic dispersion law $\omega(k)=\sqrt{|k|^2+m^2}$ with $m>0$, or more generally, $\omega(k)=\sum\limits_{j=1}^N \sqrt{|k_j|^2+m_j^2},$ with $k=(k_1,\cdots,k_N), k_j\in \mathbb R^d, m_j>0$. Thus conditions (H) and (AH) are satisfied for the semi-relativistic $N$-particle quantum Hamiltonian (cf.~\cite{SigWu})
\begin{equation}\label{HN}
H=\sum\limits_{j=1}^N \sqrt{|p_j|^2+m_j^2}+V(x),
\end{equation}
where $x=(x_1,\cdots, x_N), x_j\in \mathbb R^d$, and $p_j=-i\nabla_{x_j}$, $j=1,\cdots,N$, and $V(x_1,\cdots,x_N)$ is a standard $N$-body potential. Hence Theorems \ref{cor:ent-prop1} - \ref{thm:SN-est} hold for discrete and semi-relativistic $N$-body systems.\par

\vspace*{.1cm}

Using \eqref{c0}, we can easily compute $c(\mu)$ in the leading order in $\mu$, e.g., for \eqref{H} on $L^2(\R^d)$, with $T=\om(p)$,  we have 
\begin{equation}\label{c0'}c(\mu)=\sup_{\hat \eta\in S^{n-1}} \sup_{k\in \R^{n}} \big(\hat\eta\cdot \n\om(k) \big)+O(\mu).\end{equation} 
For example, for $\om(k)=\tilde\om(|k|)$, we have $c(\mu)= \sup_{q\ge 0} \tilde\om'(q)+O(\mu)$, which for \eqref{HN} yields $c(\mu)= 1+O(\mu)$, with $1$ being the speed of light in our units.  This bound could be saturated for $V=0$ and initial states localized at very high energies.

\vspace*{.1cm}

For the discrete model  Hamiltonian $H=T + V$ on $\ell^2(\mathbb{Z}^n)$ described above, 
with  the nearest-neighbour hopping,   i.e., $t_{xy}=\tau\delta_{|x-y|=1}$, in 1D,  
\eqref{c0} yields 
\begin{equation}\label{c0''}c(0)=  
\sup (\pm i[H,x])= \tau \sup_{k\in \R}(\pm i\tau (e^{ik}-e^{-ik})=2 \tau \sup_{k\in \R} \sin(k)=2 \tau.\end{equation} 
Passing to physical units, which amounts to replacing  $\tau$ by $\lam\tau/\hbar$, where $\lam$ is the physical lattice spacing, we find  $c(0)=2 \lam\tau/\hbar$. 

\vspace*{.1cm}

  Following \cite{FLSZ},  
 consider a typical 1D optical lattice experiment, 
		 e.g., \cite{Cheneau1, Cheneau2}, which is modelled by   the discrete Hamiltonian with  the nearest-neighbour hopping with an effective hopping amplitude $\tau/\hbar \approx 500  \mathrm{s}^{-1}$ between neighbouring lattice sites  spaced $\lam\approx 500 \mathrm{nm}$ apart. Recalling that $c(0)=2 \lam\tau/\hbar$, this gives 
 $c(0)\approx 5 \cdot 10^{5}$nm/sec. (In the experiment, the propagation observed up to time $t_{\max} \approx 3 \hbar/\tau= 6 \cdot 10^{-3}$sec.) 		
  This bound could be  saturated for $V=0$ by taking the initial quasi-momentum  localized at $k_*=\pi \hbar/2\lam$.

\vspace*{.1cm}

\section{Proofs of  Theorems \ref{thm:Gt-est},  \ref{cor:ent-prop1}(a) and \ref{thm:SN-est}} 
\label{sec:bipart-evol-pf}

\begin{proof}[Proof of Theorem \ref{thm:Gt-est}] Define  the quantum Liouville (von Neumann) operators 
\begin{align}\label{L}
	L \G :=  -i  [H_{\AB},  \G] \quad \text{ and }\ L_{ 0} \G :=  -i  [H_{0},  \G],\end{align}
acting on the space $S_1^{AB}$. We use that $L$ can be written as 
\begin{align}\label{L-deco}	L = L_{ 0} +\hat I , 
\end{align}
where $L$ and $L_{ 0}$ are defined in \eqref{L} and $\hat I \G :=  -i  [I,  \G]$,   
and apply  Duhamel's principle to  \eqref{vNE-eqAB}, to obtain 
\begin{align}\label{Gamma-bar}
	\Gamma_{t}&=e^{t L_{ 0} }\Gamma_{0}+\int_{0}^{t}e^{(t-s) L_{ 0} } \hat I \Gamma_{s}ds.
\end{align}
Using this formula,  
we estimate 
\begin{align}\label{Gt-est1}
   & \|\tilde\chi_X \big( \G_t - e^{t L_{ 0} }\G_0 \big)\|_{S_1^{AB}}
    \leq  \int_{0}^{t}\|\tilde\chi_X e^{s L_{ 0} } \hat I \Gamma_{t-s}ds\|_{S_1^{AB}} ds.
 \end{align} 

\vspace*{.1cm}

We estimate the norm under the integral on the r.h.s. of \eqref{Gt-est1}. Denote $\chi_Y^\otimes  := \chi_Y\otimes \one_B$. Using the definition $\hat I \G :=  -i  [I,  \G]$ and  the relations $e^{r L_{ 0} }\G=e^{-i H_{ 0} r}\G e^{ i H_{ 0}r}$, we write
\begin{align}\label{Gt-est2} \tilde\chi_X e^{s L_{ 0} } \hat I \Gamma &=\tilde\chi_X e^{s L_{ 0} } \big[\chi_Y^\otimes I \Gamma - \Gamma I \chi_Y^\otimes\big] \notag \\
 &= \chi_X^\otimes e^{-i s H_{ 0} }\big[ \chi_Y^\otimes I \Gamma - \Gamma I \chi_Y^\otimes\big]e^{i s H_{ 0} }\chi_X^\otimes.
  \end{align}
Applying the $\mathcal S_1^{AB}$-norm to this identity and using that $\|A\lam\|_{S_1}, \|\lam A\|_{S_1}\le \|A\|\|\lam\|_{S_1}$, we obtain the estimate
\begin{align}\label{Gt-est3}\|\tilde\chi_X  e^{s L_{ 0} } \hat I \Gamma \|_{S_1^{AB}}
\le &\| \chi_X^\otimes e^{-i s H_{ 0} } \chi_Y^\otimes I \Gamma \|_{S_1^{AB}} +\| \Gamma I \chi_Y^\otimes e^{i s H_{ 0} }\chi_X^\otimes\|_{S_1^{AB}}.
  \end{align}
Since \begin{align}\label{Gt-est4}\| \chi_X^\otimes e^{-i s H_{ 0} } \chi_Y^\otimes)\|
&=\|\chi_Y^\otimes e^{i s H_{ 0} }\chi_X^\otimes\| =\| \chi_X e^{-i s H_{\A} } \chi_Y\|,\end{align}
Eq. \eqref{Gt-est3} and the inequalities $\|A\lam\|_{S_1}, \|\lam A\|_{S_1}\le \|A\|\|\lam\|_{S_1}$ give 
\begin{align}\label{Gt-est5}\|\tilde\chi_X & e^{s L_{ 0} } \hat I \Gamma \|_{S_1^{AB}}\le \| \chi_X e^{-i s H_{\A} } \chi_Y\| \big( \| I \Gamma \|_{S_1^{AB}}+\|\Gamma I \|_{S_1^{AB}}\big).
  \end{align}  
Remembering \eqref{I-bndd'}, we take, in what follows, 
\[J:= (H_{0}+1)^{-1}.\]
With this definition and \eqref{I-bndd'}, we have 
\begin{align}\label{I-bndd''}	&\|I J\|=\|J I\|< \infty.\end{align} 
Now, relations \eqref{Gt-est5} and \eqref{I-bndd''} imply  
\begin{align}\label{Gt-est6}\|\tilde\chi_X & e^{s L_{ 0} } \hat I \Gamma \|_{S_1^{AB}}\le 2\| \chi_X e^{-i s H_{\A} } \chi_Y\| \|I J\| \|J^{-1}\Gamma\|_{S_1^{AB}}.
  \end{align}
     This relation, Eq. \eqref{Gt-est1}, condition \eqref{Gam0-cond} ($\|J^{-1}\G_0 \|<\infty $) on $\Gamma_{0}$ and Lemma \ref{lem:G-t-unif-bnd} of Appendix \ref{sec:tech-lem} (giving $\|J^{-1}\G_t \|_{S_1^{AB}}\le C  \|J^{-1}\G_0 \|_{S_1^{AB}}$) yield 
\begin{align}\label{Gt-est7}\|\tilde\chi_X \big( \G_t -& e^{t L_{ 0} }\G_0 \big)\|_{S_1^{AB}}\notag\\
&\le 
C\int_{0}^{t}\| \chi_Xe^{-i H_{ A} s} \chi_Y  \|ds\| I J\|\|J^{-1}\Gamma_{0}\|_{S_1^{AB}}.
 \end{align}
 To estimate the integral on the r. h.s. we use  the following result

\vspace*{.1cm}

\begin{theorem}[\cite{SigWu}]\label{thm:MVB-al} Assume $H$ is self-adjoint and satisfies Condition~
~(AH). Then, for any $\mu\in (0,a)$ and 
for any two disjoint sets $X$ and $Y$ in $\Lambda$, the MQOD $\al_t$ satisfies
\begin{equation}\label{MVB-al}
\| \chi_X e^{-i H_{ A} s} \chi_Y  \|    \leq Ce^{- \mu(d_{XY}-ct)},
\end{equation}
for any $c>c(\mu)$ and some constant $C=C_{n, c, \mu}>0$ depending on $n, c, \mu$. Here  $c(\mu)$, is given  by~\eqref{cmu}.
\end{theorem}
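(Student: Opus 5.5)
We prove the bound \eqref{MVB-al} of Theorem \ref{thm:MVB-al} by the standard complex-boost (Combes--Thomas type) argument. This argument turns the analyticity in Condition (AH) into exponential decay of $\chi_X e^{-iHs}\chi_Y$. We write $H=H_A$.

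\textbf{Step 1: reduction to a weighted propagator.} Fix $\mu<a$ and a direction $\hat\eta\in S^{n-1}$, and set $\eta=\mu\hat\eta$. We replace $\eta\cdot x$ by a smoothed, bounded distance-type function $f$. The natural choice is a regularization of $d_Y(x)$ with $|\nabla f|\le 1$, $f=0$ on $Y$, and $f\ge d_{XY}$ on $X$ (up to an $O(1)$ constant). Then
\[
\chi_X e^{-iHs}\chi_Y=\chi_X e^{-\mu f}\big(e^{\mu f}e^{-iHs}e^{-\mu f}\big)e^{\mu f}\chi_Y ,
\]
so that
\[
\|\chi_X e^{-iHs}\chi_Y\|\le e^{-\mu d_{XY}}\,\|e^{\mu f}e^{-iHs}e^{-\mu f}\|.
\]
In the simplest case, where $X$ and $Y$ are separated by a hyperplane, one can take $f$ linear, $f=\hat\eta\cdot x$ shifted. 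Then $e^{\mu f}He^{-\mu f}=H_{i\eta}$ (formally $H_\xi$ with $\xi=i\eta$ up to sign conventions), which is exactly the analytic continuation in (AH). For general sets we either cover by half-spaces or use the version of (AH) with $\xi\cdot x$ replaced by $\xi f(x)$ for Lipschitz $f$. The latter is what \cite{SigWu} do; we assume it follows from (AH) by the same analyticity argument.

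\textbf{Step 2: energy estimate for the boosted evolution.} Let $\psi_s=e^{\mu f}e^{-iHs}e^{-\mu f}\phi$. Then $\partial_s\psi_s=-iH_{i\eta}\psi_s$, and hence
\[
\partial_s\|\psi_s\|^2=2\langle\psi_s,\Im H_{i\eta}\,\psi_s\rangle\le 2\mu\, c(\mu)\,\|\psi_s\|^2
\]
by the definition \eqref{cmu} of $c(\mu)$ and the boundedness of $\Im H_{i\eta}$ from (AH). Grönwall then gives $\|e^{\mu f}e^{-iHs}e^{-\mu f}\|\le e^{\mu c(\mu)s}$. Combined with Step 1, this yields \eqref{MVB-al} for every $c\ge c(\mu)$. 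Taking $c>c(\mu)$ leaves room to absorb the error terms described next.

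\textbf{Step 3: the main obstacle.} The main obstacle is making Step 2 rigorous when $f$ is unbounded or nonlinear. The operators $e^{\pm\mu f}$ are unbounded, so $\psi_s$ may fail to be in the domain. In addition, for nonlinear $f$ the commutator produces second-order corrections, which means $\Im(e^{\mu f}He^{-\mu f})$ is not exactly $\Im H_{\zeta}$.

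We handle the domain issue by first using bounded cutoffs $f_R=\min(f,R)$, smoothed. We then prove the estimate uniformly in $R$, using the analyticity of $\zeta\mapsto H_\zeta$ on the common domain (Remark 7) to justify the differentiation, and let $R\to\infty$.

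For the corrections, we expand $\Im(e^{\mu f}He^{-\mu f})$ around the linear boost at each point. Here $\nabla f\in\overline{B_1}$, so the local boost $\mu\nabla f$ stays in $\overline{S^n_\mu}$, and the sup over $\overline{S^n_\mu}$ in \eqref{cmu} covers it. The remaining error is $O(\mu\,\|\nabla^2 f\|)$, which can be made arbitrarily small by smoothing $f$ on a large scale. This error is then absorbed by the gap $c-c(\mu)>0$, giving the constant $C=C_{n,c,\mu}$.
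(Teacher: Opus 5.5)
\textbf{Gap: the step from slab-separated sets to arbitrary disjoint $X,Y$.} The paper gives no proof of its own here; it quotes \cite{SigWu}. Your Steps 1--2 are the right mechanism, but they only prove the estimate when $X$ and $Y$ are separated by a slab. Suppose $\hat\eta\cdot x\ge b_2$ on $X$ and $\hat\eta\cdot y\le b_1$ on $Y$. Conjugation by $e^{\mu\hat\eta\cdot x}$ produces $H_{-i\mu\hat\eta}$, and $-i\mu\hat\eta\in\overline{S^n_\mu}$ since $|\hat\eta_j|\le 1$. Your Gr\"onwall argument then gives $\|\chi_Xe^{-iHs}\chi_Y\|\le e^{-\mu(b_2-b_1)+\mu c(\mu)s}$.

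Condition (AH), however, only concerns $e^{i\xi\cdot x}He^{-i\xi\cdot x}$ with \emph{constant} $\xi$. For an abstract self-adjoint $H$ there is no symbol calculus. So ``expanding $\Im(e^{\mu f}He^{-\mu f})$ around the linear boost at each point'' with an $O(\mu\|\nabla^2 f\|)$ error is not justified, and $e^{\mu f}He^{-\mu f}$ for nonlinear $f$ is not controlled by the hypotheses; you simply assume it. The same objection hits your domain regularization: $\min(f,R)$ is nonlinear even when $f=\hat\eta\cdot x$. Covering $X$ by half-spaces does not repair this either. If $X$ surrounds $Y$ (a ball of radius $R$ and the exterior of the ball of radius $R+d$), the number of half-spaces and the loss in effective separation depend on $R/d$, so the constant is not $C_{n,c,\mu}$.

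\textbf{Repair using only linear weights, localized at unit scale.} First, justify the slab bound by analytic continuation rather than truncation. For compactly supported $\phi,\psi$, the map $\zeta\mapsto\langle e^{-i\bar\zeta\cdot x}\psi,e^{-iHs}e^{-i\zeta\cdot x}\phi\rangle$ is entire and equals $\langle\psi,e^{-iH_\zeta s}\phi\rangle$ for real $\zeta$. Hence they agree on $S^n_a$, where your energy estimate gives $\|e^{-iH_\zeta s}\|\le e^{s\sup\Im H_\zeta}$.

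Next, tile by unit cubes $C_k$ with centres $z_k$, and set $X_k=X\cap C_k$, $Y_j=Y\cap C_j$. Apply the slab bound with rate $\mu'$ in the direction $(z_k-z_j)/|z_k-z_j|$, and use the trivial bound if $|z_k-z_j|<2\sqrt n$. This gives $a_{kj}:=\|\chi_{X_k}e^{-iHs}\chi_{Y_j}\|\le e^{2\mu'\sqrt n}e^{-\mu'|z_k-z_j|+\mu'c(\mu')s}$, and $a_{kj}=0$ unless $|z_k-z_j|\ge d_{XY}-\sqrt n$.

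Schur's test, $\|\chi_Xe^{-iHs}\chi_Y\|^2\le\big(\sup_k\sum_ja_{kj}\big)\big(\sup_j\sum_ka_{kj}\big)$, then yields $C_{n,\mu'}(1+d_{XY})^{n-1}e^{-\mu'd_{XY}+\mu'c(\mu')s}$. This constant does not depend on the geometry of $X$ and $Y$.

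Finally, $c>c(\mu)$ and $\mu\mapsto\mu c(\mu)$ is continuous, because the $\Im H_{i\eta}$ form a locally bounded harmonic family (cf.\ Remark 8). So you can choose $\mu'\in(\mu,a)$ with $\mu'c(\mu')\le\mu c$. Then $(1+d)^{n-1}e^{-(\mu'-\mu)d}$ is bounded, and \eqref{MVB-al} follows with $C=C_{n,c,\mu}$.
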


(For an extension of the approach of \cite{SigWu} to many-body setting, see \cite{FLSZ}.)  
 Applying inequality \eqref{MVB-al} to the term $\| \chi_Xe^{-i H_{ A} s} \chi_Y  \|$ in \eqref{Gt-est7}, 
 we arrive at \eqref{Gt-est}.
\end{proof}

\vspace*{.1cm}

All remainders in the rest of this section are estimated in the trace norm of $S_1^\AB$.

\begin{proof}[Proof of Theorem \ref{cor:ent-prop1}(a)]  
Assume $\G_{0}$ is satisfies \eqref{Gam0-cond} and separable outside a set $Q\subset \Lam$, i.e.  $\chi_{Q^c}^\otimes\G_{0}\chi_{Q^c}^\otimes$ is separable. Then, we can write  $\G_{0}= \G_{0}'+ \G_{0}''$, where $\G_{0}'=\chi_{Q^c}^\otimes\G_{0}\chi_{Q^c}^\otimes$ is  
separable and $\G_{0}''$ is of the form
\begin{align}\label{Gam''}\G_{0}'' = {\chi_Q^\otimes }\G_{0}  {\chi_Q^\otimes }&+ {\chi_Q^\otimes }\G_{0} {\chi_{Q^c}^\otimes  } + {\chi_{Q}^\otimes }\G_{0} {\chi_{Q^c}^\otimes }. \end{align}
(We say $\G_{0}''$ is semi-localized in $Q$.)  
 Let $\G_{ 0}'= \sum_i p_i\g_{\A}^i\otimes \g_{\B}^i$.  
 Then $e^{t L_{ 0}}\G_0'= \sum_i p_i\al_t^\A(\g_{\A}^i)\otimes \al_t^\B(\g_{\B}^i)$ and therefore  $\tilde\chi_X e^{t L_{ 0}}\G_0'$ is separable:
\begin{align}\label{Gam't}\tilde\chi_X e^{t L_{ 0}}\G_0'= \sum_i p_i \chi_X \al_t^\A(\g_{\A}^i)  \chi_X \otimes \al_t^\B(\g_{\B}^i). \end{align}

For $\G_{0}''$, by Eqs \eqref{Gam''}, the definitions $\tilde\chi_X(\G)={\chi_X^\otimes}\G{\chi_X^\otimes}$ (see \eqref{tilde-chi}) and $e^{t L_{ 0}}\G= e^{-i H_{0} t}  \G e^{i H_{0} t}$,  we have 
\begin{align}\label{Gam''t2}\tilde\chi_X e^{t L_{ 0}}\G_0'' 
&=\chi_X^\otimes e^{-i H_{0} t}  \G_0'' e^{i H_{0} t} \chi_X^\otimes\notag\\
&  = \chi_{X}^\otimes e^{-i H_{0} t} {\chi_{Q}^\otimes }\G_{0} {\chi_{Q}^\otimes } e^{i H_{0} t} \chi_{X}^\otimes \notag\\
& +  \chi_{X}^\otimes e^{-i H_{0} t} {\chi_{Q}^\otimes }\G_{0} {\chi_{Q^c}^\otimes } e^{i H_{0} t}\chi_{X}^\otimes \notag\\ 
 & +   \chi_{X}^\otimes  e^{-i H_{0} t}
{\chi_{Q^c}^\otimes }\G_{0} {\chi_{Q}^\otimes } e^{i H_{0} t}\chi_{X}^\otimes . 
\end{align}
Now, by the inequalities $\|A\lam\|_{S_1}, \|\lam A\|_{S_1}\le \|A\|\|\lam\|_{S_1}$, and $\|\chi_{Q}^\otimes  e^{-i H_{0} t}\chi_{X}^\otimes \|, $\\
$ \|\chi_{Q^c}^\otimes  e^{-i H_{0} t}\chi_{X}^\otimes \|\le 1$, and the relation $\| \chi_X^\otimes  e^{-i H_{0} s} \chi_Q^\otimes   \| =\| \chi_Q^\otimes  e^{i H_{ 0} s} \chi_X^\otimes   \|$, this  yields
\begin{align}\label{Gam''t-est1}\|\tilde\chi_X e^{t L_{ 0}}\G_0''\|_{\mathcal S_1^\AB} 
 \le &3\|\chi_{X}^\otimes e^{-i H_{0} t}\chi_{Q}^\otimes \|   \|\G_{0} \|_{\mathcal S_1^\AB}. \end{align}
The latter inequality, relation  \eqref{Gt-est4}   and estimates \eqref{MVB-al} and $\|\G_{0} \|_{\mathcal S_1^\AB}= 1$ imply 
\begin{align}\label{Gam''t-est}\|\tilde\chi_X e^{t L_{ 0}}\G_0''\|_{\mathcal S_1^\AB} 
 &    \leq Ce^{- \mu(d_{XQ}-ct)},
\end{align}
for any $c>c(\mu)$ and some constant $C=C_{n, c, \mu}>0$ depending on $n, c, \mu$. 

Now, relations   $\G_{0}= \G_{0}'+ \G_{0}''$ and \eqref{Gam''t-est} imply 
\begin{align}\label{Gam0t-est}\tilde\chi_X e^{t L_{ 0}}\G_0 = \tilde\chi_X e^{t L_{ 0}}\G_0' + O(e^{- \mu(d_{XQ}-ct)}),\end{align}
 which together with \eqref{Gt-est} and the notation $d=\min (d_{XY}, d_{XQ})$, yields
\begin{align}\label{Gamt-est}\tilde\chi_X   \G_t = \tilde\chi_X e^{t L_{ 0}}\G_0'   + O(e^{-2\mu(d -ct)}). 
 \end{align}
According to the definition after Eq. \eqref{un-ent-do}, 
$\G_t$ is  is  separable mod $\kappa$  in $X$, with $\kappa= O(e^{-2\mu(d-ct)})$. This proves  Theorem \ref{cor:ent-prop1}(a).
 \end{proof} 
 
\vspace*{.1cm}

\begin{proof}[Proof of Theorem \ref{thm:SN-est}]
Write $\G_0={\chi_{Q}^\otimes }\G_{0} {\chi_{Q}^\otimes }+{\chi_{Q^c}^\otimes }\G_{0} {\chi_{Q}^\otimes }  +  {\chi_{Q}^\otimes }\G_{0} {\chi_{Q^c}^\otimes }   +   {\chi_{Q^c}^\otimes }\G_{0} {\chi_{Q^c}^\otimes }$ and use the Schwartz and triangle inequalities to estimate 
 \begin{align}\label{G0-ineq} &\G_0 \ge \frac12 {\chi_{Q}^\otimes }\G_{0} {\chi_{Q}^\otimes } - \frac12{\chi_{Q^c}^\otimes }\G_{0} {\chi_{Q^c}^\otimes } =: \G_0'- \G_0''.\end{align}
 By the conditions on $\G_0$, we have that $\G_0', \G_0'' \ge0$ and that $\G_0'$ has SN $k$. Moreover, $\G_0'$ and $\G_0''$ localized in disjoint sets and therefore, in particular, $\G_0'  \G_0''=\G_0'' \G_0'=0$.

\vspace*{.1cm}

Next, proceeding as in \eqref{G0-ineq}, we obtain  
\begin{align}\label{Gam't-est1}  \tilde\chi_X (e^{t L_{ 0} }\G_0')\ge\frac12 e^{t L_{ 0} }\G_0' - \frac12 \tilde\chi_{X^c} e^{t L_{ 0} }\G_{0}' . 
\end{align} 
Now, use $\tilde\chi_{X^c} e^{t L_{ 0} }\G_{0}'=\frac12 \tilde\chi_{X^c} e^{t L_{ 0} }{\tilde\chi_{Q}}\G_{0}=\frac12 V\G_{0}V^* $, where $V:=\chi_{X^c}^\otimes e^{- it H_{ 0} }{\chi_{Q}^\otimes }$,  and then use \eqref{MVB-al} from  Theorem \ref{thm:MVB-al} to estimate $V=O(e^{- \mu(d_{X^cQ}-ct)})$ (cf. \eqref{Gam''t2}-\eqref{Gam0t-est}, arriving at 
\begin{align}\label{Gam't-est2}  \tilde\chi_X (e^{t L_{ 0} }\G_0')\ge\frac12 e^{t L_{ 0} }\G_0' 
+O(e^{- 2\mu(d_{X^cQ}-ct)}). 
\end{align} 

Since $\tilde\chi_X$ and $e^{t L_{ 0}}$ are positivity preserving maps, we have, by \eqref{G0-ineq} and  \eqref{Gam't-est2}. 
\begin{align}\label{nS-est2} 
\tilde\chi_X (e^{t L_{ 0}}\G_0 ) &\ge  \tilde\chi_X (e^{t L_{ 0}}(\G_0' - \G_0'') ) \notag\\
&=  \tilde\chi_X (e^{t L_{ 0}} \G_0' )  -   \tilde\chi_X (e^{t L_{ 0}} \G_0'' ) \notag\\
&\ge  e^{t L_{ 0}} \G_0'   -   e^{t L_{ 0}} \G_0'' - O(e^{- 2\mu(d_{X^cQ}-ct)}). 
\end{align}
By Thm \ref{thm:Gt-est}, $\tilde\chi_X  \G_t = \tilde\chi_X (e^{t L_{ 0}}\G_0 ) + O(e^{- \mu(d_{XY}-ct)})$. The last two relations yield 
\begin{align}\label{nS-est2'} 
\tilde\chi_X (\G_t ) &\ge  \  e^{t L_{ 0}} \G_0'   -   e^{t L_{ 0}} \G_0'' - O(e^{- 2\mu(d'-ct)}),  
\end{align}
where $d'=\min (d_{XY}, d_{X^cQ})\gg1$ and $t<d'/c(\mu)$ 

\vspace*{.1cm}

 Next, note that since $r (\G_0)<\infty$, we have $r(e^{t L_{ 0}}\G_0')<\infty$. Indeed, by elementary properties of the rank, we have $r(e^{t L_{ 0}}\G_0')= r(e^{t L_{ 0}} \frac12 {\chi_{Q}^\otimes }\G_{0} {\chi_{Q}^\otimes })= r( {\chi_{Q}^\otimes }\G_{0} {\chi_{Q}^\otimes })\le   r (\G_0)<\infty$. In this case, one could use

\vspace*{.1cm}

Relation \eqref{nS-est2'}, the facts $e^{t L_{ 0}} \G_0', e^{t L_{ 0}} \G_0''\ge 0$ and $\Tr(e^{t L_{ 0}} (\G_0') e^{t L_{ 0}} (\G_0''))=0$, 
Corollary  \ref{cor:SN-prop} and Eq   
 \eqref{SN-est0} of Appendix \ref{sec:SN-prop} imply that, 
\begin{align}\label{nS-est''}  
 n_S \big(\tilde\chi_X (\G_t)\big)\ge n_S \big(e^{t L_{ 0} }\G_0'\big)=n_S \big(\G_0'\big)=k,  \end{align} 
provided $d'\gg 1$ and $t<d'/c(\mu)$, which is precisely \eqref{SN-est} of  Theorem \ref{thm:SN-est}.  \end{proof}

\vspace*{.2cm}

\section{Conclusion}\label{sec:concl}

In this paper, we establish the entanglement \emph{light-cone structure} in bipartite quantum systems with \emph{localized interactions}. Our approach  offers a rigorous, general framework for analysis of the entanglement propagation.
This structure is complementary to the Lieb--Robinson causal structure  
 for correlations of observables, which does not apply to entanglement. 
 
 To quantify the spatial distribution of entanglement, we introduced the notion of \emph{local entanglement}, which would be especially useful in analysis of the distributed entanglement used in quantum communication. Using this notion, we investigated how this distribution changes with time.   
 
\vspace*{.1cm}

Our bounds are applicable to continuum (including semi-relativistic) as well to short-range lattice quantum systems. 

\vspace*{.2cm}

\bigskip

\paragraph{\bf Acknowledgment}

The  author is grateful to J\"urg Fr\"ohlich, Gian Michele Graf and Marius Lemm for discussions, to  J\'er\'emy Faupin, Marius Lemm, 
Xiaoxu Wu and Jingxuan Zhang, for enjoyable collaboration, and to Chi Kam Wong, for discussions and help with the mnspt.   
His research is supported in part by NSERC Grant NA7901.

\medskip

\paragraph{\bf Declarations}
\begin{itemize}
	\item Conflict of interest: The author has no conflicts of interest to declare that are relevant to the content of this article.
	\item Data availability: Data sharing is not applicable to this article as no datasets were generated or analyzed during the current study.
\end{itemize}

\appendix

 \vspace*{.1cm}

\section{Some technical lemmas}\label{sec:tech-lem}

In this appendix, we present some standard technical results used in the main text.  

\begin{lemma}\label{lem:relat-bnd} Under condition \eqref{I-bndd},  $H_{AB}$ is bounded below.  
\end{lemma}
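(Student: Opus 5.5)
We have to show that $H_{AB}=H_0+I$ is bounded below, given \eqref{HAHB-cond} and \eqref{I-bndd}. The plan is to compare quadratic forms and use the KLMN-type bound supplied by \eqref{I-bndd}.

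Set $\delta:=\|(H_{0}+1)^{-1/2}I (H_{0}+1)^{-1/2}\|<1$. First we note that $H_0\ge 0$. Indeed, $H_0=H_\A\otimes\one_\B+\one_\A\otimes H_\B$ is a sum of commuting non-negative self-adjoint operators by \eqref{HAHB-cond}, so its spectral measure is supported in $[0,\infty)$. Hence $(H_0+1)^{-1/2}$ is a bounded bijection onto the form domain $Q(H_0)=D((H_0+1)^{1/2})$.

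Now take $\psi\in Q(H_0)$ and write $\psi=(H_0+1)^{-1/2}\phi$. Then
\[
|\langle \psi, I\psi\rangle| = |\langle \phi, (H_0+1)^{-1/2}I(H_0+1)^{-1/2}\phi\rangle|\le \delta\|\phi\|^2=\delta\,\langle\psi,(H_0+1)\psi\rangle .
\]
Consequently
\[
\langle\psi,(H_0+I)\psi\rangle\ge (1-\delta)\langle\psi,H_0\psi\rangle-\delta\|\psi\|^2\ge -\delta\|\psi\|^2 .
\]
So the form of $H_{AB}$ is bounded below by $-\delta>-1$ on $Q(H_0)$. By the KLMN theorem, since $I$ is form-bounded relative to $H_0$ with bound $\delta<1$, this form is closed on $Q(H_0)$ and defines a self-adjoint operator $H_{AB}$ with $H_{AB}\ge-\delta$.

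The only point needing care is interpreting $H_{AB}$ as the form sum rather than the operator sum. Under \eqref{I-bndd'} these agree on $D(H_0)$: there $I$ is $H_0$-bounded with relative bound $<1$, so Kato–Rellich applies and the operator bound coincides with the form bound. Everything else is a routine inequality.
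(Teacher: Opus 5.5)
Your argument is correct and is essentially the paper's proof: the paper factors $H_0+1+I=(H_0+1)^{1/2}\big[\one+(H_0+1)^{-1/2}I(H_0+1)^{-1/2}\big](H_0+1)^{1/2}\ge(1-\alpha)(H_0+1)$, which is your quadratic-form inequality written as an operator inequality. The paper's display has $H_{AB}$ where $H_{AB}+1$ is meant, so your bound $H_{AB}\ge-\delta$ is the correctly stated consequence, and your KLMN/Kato--Rellich remarks on domains are extra care the paper omits.
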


\begin{proof} Using that $H_{AB}=	H_0+I$, by \eqref{H-AB}, and \eqref{I-bndd}, we obtain
\begin{align}\label{H-AB}	H_{AB}=(H_{0}+1)^{1/2} [\one+ (H_{0}+1)^{-1/2}I &(H_{0}+1)^{-1/2}](H_{0}+1)^{1/2}\notag\\
&\ge (1-\al)(H_{0}+1), 
\end{align}
where $\al:=\|(H_{0}+1)^{-1/2}I (H_{0}+1)^{-1/2}\|< 1$. 
\end{proof}

\begin{lemma}\label{lem:G-t-unif-bnd} Let  \eqref{I-bndd'} holds, let $\G_t$ solve \eqref{vNE-eqAB}  and $\|J^{-1}\G_0 \|<\infty $. 
 Then  $\|J^{-1}\G_t \|_{S_1^{AB}}\le C  \|J^{-1}\G_0 \|_{S_1^{AB}}$, uniformly in $t$.  
\end{lemma}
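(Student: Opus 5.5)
The approach is to show that the norm $\|J^{-1}\G_t\|_{S_1^{AB}}$, with $J=(H_0+1)^{-1}$, is controlled uniformly in time. The tool is comparison of $H_0+1$ with the conserved generator $H_{AB}$, shifted by a constant. The key observation is that $\G_t=e^{-itH_{AB}}\G_0e^{itH_{AB}}$ and that $H_{AB}$ commutes with its own propagator. Hence
\[
(H_{AB}+b)\G_t=e^{-itH_{AB}}(H_{AB}+b)\G_0e^{itH_{AB}},
\]
and by unitary invariance of the trace norm, $\|(H_{AB}+b)\G_t\|_{S_1^{AB}}=\|(H_{AB}+b)\G_0\|_{S_1^{AB}}$ for every $t$. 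We fix a constant $b>0$ such that $H_{AB}+b>0$; this is possible by Lemma~\ref{lem:relat-bnd}. Given this conservation law, the lemma reduces to showing that $H_{AB}+b$ and $H_0+1$ are mutually relatively bounded. In operator language, we need $(H_0+1)(H_{AB}+b)^{-1}$ and $(H_{AB}+b)(H_0+1)^{-1}$ to be bounded.

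The second bound is immediate from \eqref{I-bndd'}:
\[
(H_{AB}+b)J=(H_0+1)J+IJ+(b-1)J=\one+IJ+(b-1)J,
\]
which has norm at most $2+|b-1|$. For the first bound we use $H_{AB}=H_0+I$ and write
\[
H_0+1=(\one+IJ)^{-1}(H_{AB}+1)\quad\text{on } D(H_0),
\]
because $H_{AB}+1=(H_0+1)+I=(\one+IJ)(H_0+1)$. Condition \eqref{I-bndd'} gives $\|IJ\|<1$, so $\one+IJ$ is invertible by a Neumann series with $\|(\one+IJ)^{-1}\|\le (1-\|IJ\|)^{-1}$. It follows that
\[
(H_0+1)(H_{AB}+b)^{-1}=(\one+IJ)^{-1}(H_{AB}+1)(H_{AB}+b)^{-1},
\]
and the last factor is bounded by functional calculus, with norm at most $\max(1,1/b)$ once $H_{AB}+b\ge b$. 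In passing this also shows $D(H_{AB})=D(H_0)$, so the conservation identity makes sense for $\G_0$ with $\|J^{-1}\G_0\|_{S_1}<\infty$: the range of such a $\G_0$ lies in $D(H_0)=D(H_{AB})$, and this range is preserved by the propagator.

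Combining the pieces, and writing $B_1=(H_0+1)(H_{AB}+b)^{-1}$ and $B_2=(H_{AB}+b)J$, we get
\[
\|J^{-1}\G_t\|_{S_1^{AB}}\le\|B_1\|\,\|(H_{AB}+b)\G_t\|_{S_1^{AB}}=\|B_1\|\,\|(H_{AB}+b)\G_0\|_{S_1^{AB}}\le\|B_1\|\,\|B_2\|\,\|J^{-1}\G_0\|_{S_1^{AB}},
\]
using $\|A\lambda\|_{S_1}\le\|A\|\|\lambda\|_{S_1}$. This gives the claim with $C=\|B_1\|\|B_2\|$, which is independent of $t$.

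The main obstacle is domain bookkeeping rather than estimates. We must justify that $\G_t$ maps into $D(H_{AB})$ and that the unbounded-operator identity $(H_{AB}+b)e^{-itH_{AB}}=e^{-itH_{AB}}(H_{AB}+b)$ on $D(H_{AB})$ may be used inside the trace norm. We must also check that $H_{AB}$ is self-adjoint on $D(H_0)$. The factorization $H_{AB}+1=(\one+IJ)(H_0+1)$ with $\one+IJ$ invertible handles self-adjointness, and it is essentially a Kato--Rellich argument under \eqref{I-bndd'}. If needed, we would instead derive the result from the Duhamel formula \eqref{Gamma-bar} together with Gronwall's inequality. That route gives only an $e^{Ct}$ bound, however, so the conservation argument is the one to use.
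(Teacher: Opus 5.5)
Your argument is correct and is essentially the paper's proof. The paper also commutes the weight $J_{\AB}^{-1}=H_{\AB}+1$ through $e^{-iH_{\AB}t}$ and reduces to the boundedness of $J^{-1}J_{\AB}=(\one+IJ)^{-1}$ and $J_{\AB}^{-1}J=\one+IJ$, which are your $B_1$ and $B_2$ with $b=1$; your Neumann-series inversion is in fact a cleaner justification than the paper's a priori inequality $\|J^{-1}J_{\AB}\|\le 1+\al\|J^{-1}J_{\AB}\|$.

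One small point: your factorization $H_{\AB}+1=(\one+IJ)(H_0+1)$ already gives $-1\in\rho(H_{\AB})$, so you can take $b=1$. That removes the need for Lemma \ref{lem:relat-bnd}, which uses \eqref{I-bndd} rather than the lemma's hypothesis \eqref{I-bndd'}. It also removes the bound $\max(1,1/b)$, which as you stated it presumes $H_{\AB}\ge 0$.
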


\begin{proof} Since   $\G_t$ solves \eqref{vNE-eqAB}, we can write $\G_t=e^{L t}\G_0$, where $L$ is defined in  \eqref{L}. Hence, we would like to commute $J$ through $e^{L t}$. Since  the weight $J_{\AB}:= (H_{\AB}+1)^{-1}$ commutes with $ e^{-i H_{ \AB} s} $, we have 
\[J^{-1}\G_t=J^{-1} J_{\AB} e^{-i H_{ \AB} s} J_{\AB}^{-1}\G_0 e^{i H_{ \AB} s}= J^{-1} J_{\AB}  e^{-i H_{ \AB} s} J_{\AB}^{-1} J J^{-1}\G_0 e^{i H_{ \AB} s}.\]
Hence,  it suffices to  show that $J^{-1} J_{\AB}$ and $J_{\AB}^{-1}J $ are bounded. For $J^{-1} J_{\AB}$, we have 
\[J^{-1} J_{\AB}=(H_{0}+1)(H_{\AB}+1)^{-1}=\one - I (H_{\AB}+1)^{-1} =\one - I J J^{-1} J_{\AB},\] 
which, by condition \eqref{I-bndd'}, implies $\|J^{-1} J_{\AB}\|\le 1+ \al\|J^{-1} J_{\AB}\|$, where $\al:=\|I J\|<1$, which shows that $J^{-1} J_{\AB}$is bounded. The boundedness of $J_{\AB}^{-1}J $ follows from the relation 
\[J_{\AB}^{-1}J=(H_{\AB}+1)(H_{0}+1)^{-1}=[\one +I (H_{0}+1)^{-1}]\]
 and the boundedness of $I J$, due to condition \eqref{I-bndd'}. By the above, we have  
 \[\|J^{-1}\G_t\|_{S_1^{AB}}\le \|J^{-1} J_{\AB}\|  \|J_{\AB}^{-1} J\|  \|J^{-1}\G_0 \|_{S_1^{AB}}= C  \|J^{-1}\G_0 \|_{S_1^{AB}},\]
giving the desired statement. 
\end{proof}

\vspace*{.1cm}

For the next result. see \cite{Conv}.

\begin{lemma}\label{lem:adj-norm} $\lam\in S_1\ra \lam^*\in S_1$ and $\|\lam \|_{S_1}=\|\lam^* \|_{S_1}$.  
\end{lemma}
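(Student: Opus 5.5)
The statement is Lemma \ref{lem:adj-norm}: for $\lam\in S_1$, also $\lam^*\in S_1$, and $\|\lam\|_{S_1}=\|\lam^*\|_{S_1}$. I would prove it through the polar decomposition and the singular values. Write $\lam=U|\lam|$, where $|\lam|=(\lam^*\lam)^{1/2}$ and $U$ is a partial isometry. Here $U^*U$ is the projection onto $\overline{\Ran|\lam|}=(\ker\lam)^\perp$, and $UU^*$ is the projection onto $\overline{\Ran\lam}$.

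First, I would identify $|\lam^*|$. Since $\lam\lam^*=U|\lam|^2U^*$ and $U^*U$ acts as the identity on $\overline{\Ran|\lam|}$, the operator $U|\lam|U^*$ is positive and satisfies
\[(U|\lam|U^*)^2=U|\lam|\,U^*U\,|\lam|U^*=U|\lam|^2U^*=\lam\lam^*.\]
Uniqueness of the positive square root then gives $|\lam^*|=U|\lam|U^*$.

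Second, I would compare traces. Take an orthonormal eigenbasis $\{e_j\}$ of the compact positive operator $|\lam|$ restricted to $(\ker\lam)^\perp$, with eigenvalues $s_j>0$. The vectors $f_j:=Ue_j$ are orthonormal, and $|\lam^*|f_j=U|\lam|U^*Ue_j=s_jf_j$. Moreover, $|\lam^*|$ vanishes on $(\Ran U)^\perp=\ker U^*$. So $|\lam^*|$ has exactly the same nonzero eigenvalues as $|\lam|$, with the same multiplicities. Summing over a basis made of $\{f_j\}$ completed by a basis of $\ker U^*$ gives
\[\Tr|\lam^*|=\sum_j s_j=\Tr|\lam|=\|\lam\|_{S_1}<\infty.\]
This shows that $\lam^*\in S_1$ and that the two norms are equal.

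The only step needing care is the first one, the identification $|\lam^*|=U|\lam|U^*$, since it relies on the correct initial and final spaces of the partial isometry $U$. An alternative that avoids this step is the duality formula from Remark 2, $\|\lam\|_{S_1}=\sup_{\|A\|=1}|\Tr(A\lam)|$. Combined with $\Tr(A\lam^*)=\overline{\Tr(A^*\lam)}$ and the fact that $A\mapsto A^*$ is an isometric bijection of $\cB$, it gives the norm equality directly. Membership in $S_1$ follows from this argument applied to finite-rank truncations of $\lam$, together with lower semicontinuity.
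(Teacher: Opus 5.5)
Your proof is correct, but your main argument takes a different route from the paper. The paper proves the lemma by duality in two lines. It invokes the characterization $\lambda\in S_1\iff\sup_{\|A\|=1}|\Tr(A\lambda)|<\infty$, with $\|\lambda\|_{S_1}$ equal to this supremum, and combines it with the identity $|\Tr(A\lambda)|=|\Tr(A^*\lambda^*)|$ and the fact that $A\mapsto A^*$ is an isometric bijection of the bounded operators. This is essentially the alternative you sketch at the end.

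Your primary argument works instead with singular values. Via the polar decomposition you identify $|\lambda^*|=U|\lambda|U^*$. You then show that $|\lambda|$ and $|\lambda^*|$ have the same nonzero eigenvalues with the same multiplicities, so $\Tr|\lambda^*|=\Tr|\lambda|$. All the steps check out: the initial and final spaces of $U$, the relation $U^*U|\lambda|=|\lambda|$, and the orthonormal basis $\{Ue_j\}$ of $\Ran U$ completed by a basis of $\ker U^*$.

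Your route buys three things.
\begin{itemize}
\item It is self-contained: it uses only the spectral theorem for compact self-adjoint operators and the basis-independence of the trace of a positive operator.
\item It gives the stronger fact that $\lambda$ and $\lambda^*$ have identical singular values, so $\|\lambda\|_{S_p}=\|\lambda^*\|_{S_p}$ for every $p$.
\item It avoids a subtlety in the duality route. The identity $|\Tr(A\lambda)|=|\Tr(A^*\lambda^*)|$ uses conjugation and cyclicity of the trace, which a priori require knowing that products involving $\lambda^*$ are trace class, i.e.\ the very claim. This is only harmless if $A$ is restricted to finite-rank operators, for which the duality formula still holds. That restriction is exactly the repair your truncation remark points to.
\end{itemize}
The paper's route is shorter once the duality theorem is taken from the literature. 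It also fits its Remark 2, which reads the trace norm as a supremum of expectations of observables.
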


\begin{proof}
Both statements follow from the relations (see \cite{Conv} and \cite{SigWu2}, Eq. (1.24))
   \[\lam\in S_1\Longleftrightarrow \sup_{\|A \|=1}|\Tr(A\lam)|<\infty,\]  $\|\lam \|_{S_1}=\sup_{\|A \|=1}|\Tr(A\lam)|$ and $|\Tr(A\lam)|=|\Tr(A^*\lam^*)|$. \end{proof}  
 
\vspace*{.1cm}

\section{Properties of Schmidt number} 
\label{sec:SN-prop}

In this appendix, we prove a partial extension of the result of \cite{Shir} on the lower semi continuity of $n_S$. 
\begin{prop}\label{prop:SN-prop} For any $\G_1 \ge  0$ of finite rank, $r (\G_1)< \infty$, there is $\eps=\eps(\G_1)$, such that for any $\G, \G_2$ satisfying 
\begin{align} \label{SN-cond}   
 \G\ge \G_1+\G_2,\  \G \ge  0,\    
P\G_2P\ge -\eps P, 
 \end{align}
 where $P$ is the orthogonal projection onto $\Ran \G_1$, we have 
 \begin{align} \label{SN-est4}       
 n_S (\G)\ge n_S (\G_1). \end{align} 
\end{prop}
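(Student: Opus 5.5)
The plan is to argue by contradiction through a decomposition of $\G$ into pure states of low Schmidt rank, and to use the finite rank of $\G_1$ to turn the margin $\eps$ into a strictly positive spectral gap on $\Ran\G_1$. Set $k:=n_S(\G_1)$. Since $r(\G_1)<\infty$, the restriction of $\G_1$ to $\Ran\G_1=\Ran P$ has a smallest eigenvalue $\lam_{\min}>0$, so $\G_1\ge\lam_{\min}P$. We will choose $\eps=\eps(\G_1)$ to be a small fraction of $\lam_{\min}$, adjusted at the end by a compactness constant described below. Compressing the hypothesis $\G\ge\G_1+\G_2$ by $P$, and using that compression preserves operator order, gives
\begin{align*}
P\G P\ \ge\ \G_1+P\G_2P\ \ge\ \G_1-\eps P\ \ge\ (1-\eps/\lam_{\min})\,\G_1 ,
\end{align*}
so $P\G P$ dominates a fixed positive multiple of $\G_1$ on the finite-dimensional space $\Ran P$.

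Suppose now that $n_S(\G)\le k-1$. By the definition \eqref{nS}, together with the Choquet-type representation used in the proof of Theorem \ref{thm:SN-criter}, there is a measure $\mu$ with $\G=\int P_\Phi\,d\mu(\Phi)$ and $r_S(\Phi)\le k-1$ for $\mu$-a.e. $\Phi$. The aim is to push this decomposition down to $\G_1$. On the finite-dimensional space $\Ran P$, the operator inequality $P\G P\ge c\,\G_1$ with $c>0$ should allow us to write $\G_1$ as a (normalized) part of a decomposition of $P\G P$. For this, use the standard fact that if $A\ge cB\ge0$, then $B=c^{-1}A^{1/2}KA^{1/2}$ with $0\le K\le\one$ on $\Ran A$. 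This expresses $\G_1$ as an average of the vectors $P\Phi$ transformed by a contraction. I would then use the lower semicontinuity and subadditivity of $n_S$ (Proposition 1 of \cite{Shir}, cited before \eqref{Sig1-M1}) to conclude $n_S(\G_1)\le k-1$, which is the desired contradiction.

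The main obstacle is exactly this transfer step. Neither the compression $\Phi\mapsto P\Phi$ nor the contraction $A^{1/2}KA^{1/2}$ is a local operation, so neither needs to preserve the Schmidt-rank bound $r_S\le k-1$ when $P$ is a general finite-rank projection. My first attempt would be a witness argument on the finite-dimensional space $\Ran P$. Since $n_S(\G_1)=k$, there is a Hermitian $W=PWP$ with $\lan\Phi,W\Phi\ran\ge0$ for all $\Phi$ with $r_S(\Phi)\le k-1$ and $\Tr(W\G_1)<0$. The $\eps$-margin would then be used to absorb the error terms $\Tr(W\G_2)\ge-\eps\|W\|\,r(\G_1)$.

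The difficulty is that $W$ is not positive, so the inequality $\G\ge\G_1+\G_2$ cannot be paired directly with $W$. This step is where the choice of $\eps$ in terms of $\G_1$, through $\Tr(W\G_1)$ and $\|W\|$, has to carry the argument. It is also where I would test the statement against simple two-qubit examples, for instance $\G_1$ a multiple of a Bell projector and $\G$ a large multiple of a separable diagonal state. I expect such examples either to reveal an implicit normalization or locality of $P$ needed in the statement, or to confirm that the witness estimate closes.
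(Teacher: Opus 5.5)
Your opening compression step is the same as the paper's. You have also found exactly where the argument has to break: the transfer from $P\Gamma P\gtrsim\Gamma_1$ to a lower bound on $n_S(\Gamma)$. That step cannot be completed, because the statement is false, and the two-qubit test you propose is a counterexample. Take orthonormal $e_1,e_2\in\mathcal{H}_A$ and $f_1,f_2\in\mathcal{H}_B$. Let $P_A,P_B$ be the projections onto $\mathrm{span}\{e_1,e_2\}$ and $\mathrm{span}\{f_1,f_2\}$, put $\Pi:=P_A\otimes P_B$ and $\Phi:=(e_1\otimes f_1+e_2\otimes f_2)/\sqrt2$, and set
\[
\Gamma=\tfrac14\Pi=\tfrac14\sum_{i,j=1}^{2}P_{e_i\otimes f_j},\qquad \Gamma_1=\tfrac14P_\Phi,\qquad \Gamma_2=0 .
\]
Then $\Gamma\ge0$, $\operatorname{Tr}\Gamma=1$, $\Gamma-\Gamma_1=\frac14(\Pi-P_\Phi)\ge0$ and $P\Gamma_2P=0$. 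So the hypotheses hold for every $\epsilon\ge0$, yet $n_S(\Gamma)=1<2=n_S(\Gamma_1)$. Normalization is already built in, so that rescue fails. Locality of $P$ does not rescue it either. Take $\Gamma_1':=\frac18(P_\Phi+\delta\Pi)$ with $0<\delta<\frac12$. Its range is $\operatorname{Ran}\Pi$, so its $P$ is the local projection $\Pi$. It is entangled, since its partial transpose has the eigenvalue $\frac18(\delta-\frac12)<0$. And still $\Gamma\ge\Gamma_1'$.

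The first example also shows why both of your mechanisms fail. There $P\Gamma P=\Gamma_1$, so compressing by the non-local $P$ raises the Schmidt number from $1$ to $2$. No witness with $W=PWP$ exists: such a $W$ equals $wP_\Phi$, and positivity on product vectors forces $w\ge0$. A genuine witness such as $W=\frac12\Pi-P_\Phi$ has $\operatorname{Tr}(W\Gamma_1)=-\frac18$ but $\operatorname{Tr}\bigl(W(\Gamma-\Gamma_1-\Gamma_2)\bigr)=\frac38$. The positive remainder, constrained only by its sign, absorbs the negativity, and no choice of $\epsilon$ reaches that term.

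The paper's proof makes the same transfer and breaks at the same point. It obtains $P\Gamma P\ge(c-\epsilon)C^{-1}\Gamma_1$; the text writes $\Gamma$ for $\Gamma_1$ there and at the end of the final chain, which as printed reads $n_S(\Gamma)\ge n_S(P\Gamma P)\ge n_S(\Gamma)$. It then concludes using \eqref{SN-est1} with $Q=P$ and the monotonicity \eqref{SN-est2}. Both steps are invalid:
\begin{itemize}
\item \eqref{SN-est1} is justified only for $Q$ acting on $\mathcal{H}_A$ alone, which is what gives $r_S(Q\Psi)\le r_S(\Psi)$. $P$ is not of that form, which is your objection, and the example shows it matters.
\item \eqref{SN-est2} is simply false: above, $\Gamma\ge\Gamma_1$ with $n_S(\Gamma)<n_S(\Gamma_1)$. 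Its proof equates operator order with an order between representing measures, which does not hold.
\end{itemize}
The underlying point is that a one-sided bound $\Gamma\ge\Gamma_1+\Gamma_2$ cannot control $n_S(\Gamma)$ from below. Adding positive, even separable, noise can wash out entanglement. What is true is a two-sided version. By lower semicontinuity of $n_S$ (Proposition 1 of \cite{Shir}, as the paper cites it), $n_S\ge n_S(\Gamma_1)$ on a trace-norm neighbourhood of the normalized $\Gamma_1$. So one needs $\|\Gamma-\Gamma_1\|_{S_1}$ small, not merely $\Gamma-\Gamma_1-\Gamma_2\ge0$ with $P\Gamma_2P\ge-\epsilon P$.
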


In the proof below, we use the following elementary lemma 

\begin{lemma}\label{lem:SN-est} For any positive trace-class operators $\G, \G_1, \G_2$ (on a Hilbert space $\cH$), we have   
\begin{align}  
\label{SN-est1}& n_S (\G) \ge n_S (Q\G Q^*),\ \forall \text{ bounded operator  $Q$ acting on } \cH_\A,\\  
\label{SN-est2}&n_S (\G_1)\ge n_S (\G_2),\ \text{  if }\ \G_1 \ge  \G_2.  
 \end{align}
  \end{lemma}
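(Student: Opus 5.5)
The plan for \eqref{SN-est1} is to push decompositions forward under $Q^\otimes := Q\otimes\one_\B$. For \eqref{SN-est2} the plan stops at a counterexample: the inequality fails as literally stated, and I flag this below rather than propose a proof.

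Throughout, $n_S$ of a non-normalized positive operator means $n_S$ of its normalization, as in the convention before Theorem \ref{thm:SN-est}. This makes $n_S$ invariant under multiplication by positive scalars, and $n_S(0)=0$.

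\emph{Proof of \eqref{SN-est1}.} Fix $\mu\in M_\G$, so that $\G=\int_{\cP}P_\Psi\,d\mu(\Psi)$. Conjugating under the Bochner integral, which is legitimate since $Q^\otimes$ is bounded, gives
\[
Q^\otimes\G Q^{\otimes *}=\int_{\cP}P_{Q^\otimes\Psi}\,d\mu(\Psi)=\int_{\cP}\|Q^\otimes\Psi\|^2\,P_{\Psi_Q}\,d\mu(\Psi),
\qquad \Psi_Q:=\frac{Q^\otimes\Psi}{\|Q^\otimes\Psi\|},
\]
where the integrand is set to zero whenever $Q^\otimes\Psi=0$. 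Let $\nu$ be the push-forward of the measure $\|Q^\otimes\Psi\|^2 d\mu(\Psi)$ under $\Psi\mapsto\Psi_Q$, normalized by its total mass $\Tr(Q^\otimes\G Q^{\otimes *})$; if this mass is $0$, the claim is trivial. Measurability of the map comes from its continuity on the open set $\{Q^\otimes\Psi\neq0\}$. By construction $\nu\in M_{\G'}$, where $\G'$ is the normalization of $Q^\otimes\G Q^{\otimes *}$.

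The key observation is that the Schmidt rank does not increase under local operators. If $\Psi=\sum_{j=1}^r a_j\otimes b_j$, then $Q^\otimes\Psi=\sum_{j=1}^r Qa_j\otimes b_j$, so $r_S(\Psi_Q)\le r_S(\Psi)$. Since $\nu$-null sets pull back to $\mu$-null sets, the essential supremum of $r_S$ over $\supp\nu$ is at most that over $\supp\mu$. Hence $n_S(\G')\le\sup_{\supp\mu}r_S$, and taking the infimum over $\mu\in M_\G$ gives \eqref{SN-est1}.

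\emph{Concerning \eqref{SN-est2}.} The natural attempt is the Douglas lemma: from $\G_1\ge\G_2\ge0$ one gets $\G_2^{1/2}=C\G_1^{1/2}$ with $\|C\|\le1$, hence $\G_2=C\G_1C^*$. But $C$ acts on all of $\cH_\AB$, not as $Q\otimes\one_\B$, so \eqref{SN-est1} cannot be applied, and this is the step where the argument breaks. In fact the statement fails as written. Take $\cH_\A=\cH_\B=\C^2$, $\G_1=\one$ (separable, so $n_S=1$), and $\G_2=P_\Phi$ with $\Phi$ a Bell vector. Then $\G_1\ge\G_2$ but $n_S(\G_2)=2>1$. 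The failure persists under the extra hypotheses in Proposition \ref{prop:SN-prop}: one can write $\G_1=\G_2+\G_3$ with $\G_3=\one-P_\Phi\ge0$ and $\G_2\G_3=0$.

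So \eqref{SN-est2} cannot be proved in this form. I would flag it, and in applications such as the proof of Proposition \ref{prop:SN-prop} replace it by an argument that uses only \eqref{SN-est1} together with subadditivity and lower semicontinuity of $n_S$, going back to \cite{Shir}.
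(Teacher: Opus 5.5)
\paragraph{Part \eqref{SN-est1}.} Your proof is the paper's argument: push a representing measure of $\G$ forward under $Q\otimes\one_B$, and use that Schmidt rank cannot increase under local operators. You carry it out more carefully. The paper's $\mu_Q(A)=\mu(Q^{-1}A)$ ignores the weight $\|(Q\otimes\one_B)\Psi\|^2$ and the renormalization, and your $\nu$ supplies both. There is one slip. You do not need that $\nu$-null sets pull back to $\mu$-null sets; you need the converse, $\nu(N)\le \|Q\|^2\mu(f^{-1}(N))/m$, where $f:\Psi\mapsto\Psi_Q$ and $m$ is the total mass. Apply this to $N=\{r_S>M\}$, whose preimage lies in $\{r_S>M\}$. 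The bound is immediate from the definition of $\nu$.

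\paragraph{Part \eqref{SN-est2}.} Your counterexample is correct under the scaling convention the paper itself relies on, for instance when it compares $n_S(P\G P)$ with $n_S(\G)$ in the proof of Proposition~\ref{prop:SN-prop}. In fact $\one\ge\G$ for every density operator $\G$ on $\C^2\otimes\C^2$, so \eqref{SN-est2} would make every two-qubit state separable. No proof is therefore possible.

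The error in the paper is its one-line argument $\G_1\ge\G_2\leftrightarrow\mu_1\ge\mu_2\leftrightarrow n_S(\G_1)\ge n_S(\G_2)$, which conflates a particular decomposition with the infimum defining $n_S$. From $\G_1\ge\G_2$ you can indeed pick $\mu_1=\mu_2+\mu_3$ with $\mu_3$ representing $\G_1-\G_2$. But that $\mu_1$ need not be optimal for $\G_1$. In your example every optimal decomposition of $\one$ is carried by product vectors, so it dominates no representing measure of $P_\Phi$, whose only representing measure is $\delta_{P_\Phi}$. A splitting $\G_1=\G_2+\G_3$ actually gives the reverse bound $n_S(\G_1)\le\max(n_S(\G_2),n_S(\G_3))$.

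\paragraph{Downstream consequences.} Your remark about Proposition~\ref{prop:SN-prop} is also right. In its notation take $\G=\one$, $\G_1=P_\Phi$, $\G_2=\one-P_\Phi$, so that $P\G_2P=0$. The same example with $\chi=\one$ and $\G_2=\G_3=0$ also violates Corollary~\ref{cor:SN-prop}. Hence the final step in the proof of Theorem~\ref{thm:SN-est} is unsupported, since it passes from a one-sided operator inequality to a Schmidt-number bound. Your suggested repair via lower semicontinuity of $n_S$ would need more than a lower bound: it needs trace-norm closeness to an operator of Schmidt number $k$, possibly after a local filter using \eqref{SN-est1}.
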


 \vspace*{.2cm}

 \begin{proof}[Proof of Lemma \ref{lem:SN-est}]
 For Eq. \eqref{SN-est1}, we use the relations $Q\G Q^* = \int_{\mathcal P} P_{Q\Psi}\, d\mu(\Psi)= \int_{\mathcal P} P_{\Psi}\, d\mu_Q(\Psi)$, where $\mu_Q$ is defined by $\mu_Q(A)=\mu(Q^{-1}A)$,  
 and $r_{S}(Q\Psi) \le r_{S}(\Psi)$, which together with  
  definition \eqref{nS}  
  of $n_S (\G)$, imply Eq. \eqref{SN-est1}.

 \vspace*{.1cm}

 For Eq. \eqref{SN-est2},  we let $\G_i= \G(\mu_i)$ (not unique), $i=1, 2$, and  use that
 \[\G_1\ge \G_2 \leftrightarrow \mu_1\ge \mu_2 \leftrightarrow n_S (\G_1)\ge n_S (\G_2).\] 
 \end{proof}

 \begin{proof}[Proof of Proposition \ref{prop:SN-prop}]  
 Recall that $P$ is the orthogonal projection onto $\Ran \G$. By  
 \eqref{SN-cond},  
  we have 
 \begin{align}
\label{Gam-est1} &P\G P \ge P(\G_1+\G_2)P = \G_1+P\G_2 P.
 \end{align} 
 Since $r(\G_1)<\infty$, we have $CP\ge \G_1 \ge c P$, for some $C>c> 0$.  
Furthermore, by \eqref{SN-cond},  $P \G_2 P\ge -\eps P$, for some sufficiently small  $\eps=\eps(\G_1)>0$. Hence  
 \[ \G_1 +P\G_2 P \ge (c-\eps)P,\] 
which, together with \eqref{Gam-est1}, yields $P\G P \ge (c-\eps)P$.   Taking $\eps<c$, the latter relation, together with relations $P\ge C^{-1} \G $, gives $P\G P \ge (c-\eps) C^{-1} \G$ and therefore, by Eqs \eqref{SN-est1} and  \eqref{SN-est2},  
  \[n_S (\G) \ge  n_S (P\G P)  
   \ge  n_S (\G),\] 
and so \eqref{SN-est4} follows.  
 \end{proof}

 \vspace*{.1cm}

We have the following elementary lemma: 

\begin{lemma}\label{lem:ortho} Let $\lam, \mu\in S_1$. The relations $\lam, \mu\ge 0$ and $\Tr(\lam  \mu)=0$ imply that  $P_\lam \mu= \mu P_\lam =0$, where $P_\lam$ is the orthogonal projection onto $\Ran \lam$.  \end{lemma}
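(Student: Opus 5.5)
The statement to prove is Lemma \ref{lem:ortho}: if $\lam,\mu\in S_1$, $\lam,\mu\ge0$ and $\Tr(\lam\mu)=0$, then $P_\lam\mu=\mu P_\lam=0$. The plan is to turn the trace condition into the vanishing of a Hilbert--Schmidt norm. We then read off that $\mu$ annihilates the range of $\lam$, and pass to the closure of the range by continuity.

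\textbf{Step 1: vanishing of a Hilbert--Schmidt norm.} Since $\lam,\mu\ge0$ are trace class, the square roots $\lam^{1/2}$ and $\mu^{1/2}$ are Hilbert--Schmidt. By cyclicity of the trace (valid because $\lam\mu$ is trace class and both factors $\lam^{1/2}$, $\lam^{1/2}\mu$ behave well),
\[
\Tr(\lam\mu)=\Tr(\lam^{1/2}\mu\lam^{1/2})=\Tr\big((\mu^{1/2}\lam^{1/2})^*(\mu^{1/2}\lam^{1/2})\big)=\|\mu^{1/2}\lam^{1/2}\|_{S_2}^2 .
\]
The hypothesis $\Tr(\lam\mu)=0$ therefore gives $\mu^{1/2}\lam^{1/2}=0$.

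\textbf{Step 2: $\mu$ kills the range of $\lam$.} Multiplying on the left by $\mu^{1/2}$ and on the right by $\lam^{1/2}$ gives $\mu\lam=0$. Taking adjoints gives $\lam\mu=0$.

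\textbf{Step 3: passage to the closed range.} For any $\phi=\lam\psi\in\Ran\lam$ we have $\mu\phi=\mu\lam\psi=0$. Since $\mu$ is bounded, $\mu$ vanishes on $\overline{\Ran\lam}=\Ran P_\lam$, so $\mu P_\lam=0$. Taking adjoints, and using that $\mu$ and $P_\lam$ are self-adjoint, yields $P_\lam\mu=(\mu P_\lam)^*=0$. If $P_\lam$ is meant as the projection onto $\Ran\lam$ without closure, this is the same argument restricted to $\Ran\lam$.

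\textbf{Main obstacle.} The only point needing care is justifying the cyclicity step $\Tr(\lam\mu)=\Tr(\lam^{1/2}\mu\lam^{1/2})$ in infinite dimensions. I would handle it with the standard fact $\Tr(AB)=\Tr(BA)$ whenever $A,B$ are Hilbert--Schmidt. This applies with $A=\lam^{1/2}$ and $B=\lam^{1/2}\mu$, since a Hilbert--Schmidt operator times a bounded operator is Hilbert--Schmidt.
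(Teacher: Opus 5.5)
Your proof is correct, but it follows a different route from the paper. The paper uses the spectral decompositions $\lam=\sum_i\lam_i P_{e_i}$ and $\mu=\sum_j\mu_j P_{f_j}$, with positive eigenvalues, so that implicitly $\Tr(\lam\mu)=\sum_{i,j}\lam_i\mu_j|\lan e_i,f_j\ran|^2$. The vanishing of this sum of nonnegative terms forces $e_i\perp f_j$ for all $i,j$, i.e.\ $\overline{\Ran\lam}\perp\overline{\Ran\mu}$, and $P_\lam\mu=\mu P_\lam=0$ follows directly.

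You instead factor the trace as $\|\mu^{1/2}\lam^{1/2}\|_{S_2}^2$ and obtain the operator identities $\mu^{1/2}\lam^{1/2}=0$ and $\mu\lam=\lam\mu=0$ before passing to the closed range. This has three advantages:
\begin{itemize}
\item it is basis-free;
\item it uses only that $\mu$ is bounded and nonnegative, so no eigenbasis or compactness of $\mu$ is needed;
\item it makes the one analytic point, cyclicity of the trace for Hilbert--Schmidt factors, explicit.
\end{itemize}
The paper's one-line sketch leaves the corresponding justification (evaluating $\Tr(\lam\mu)$ as a double series of nonnegative terms) tacit. In return, the paper's version gives the symmetric picture of mutually orthogonal ranges at once.

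The two are the same computation in different coordinates: expanding $\|\mu^{1/2}\lam^{1/2}\|_{S_2}^2$ in the two eigenbases reproduces the paper's double sum. Your closing remark about $P_\lam$ ``without closure'' is unnecessary, since an orthogonal projection always has closed range, so $\Ran P_\lam=\overline{\Ran\lam}$ automatically.
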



 \begin{proof}[Proof of Lemma \ref{lem:ortho}]The relations $\lam, \mu\ge 0$ and $\Tr(\lam  \mu)=0$ and the spectral decomposition of $\lam$ and $\mu$ imply that the eigenfunctions of  $\lam$ and $\mu$ are mutually orthogonal, which in turn gives $P_\lam \mu= \mu P_\lam =0$.\end{proof}

\vspace*{.1cm}

 Proposition \ref{prop:SN-prop} and  Lemma \ref{lem:ortho} yield the following 
\begin{corollary}\label{cor:SN-prop'} For any $\G_1 \ge  0$, there is $\eps=\eps(\G_1)$ such that $ \forall \G, \G_2, \G_3,$  satisfying $\G, \G_2\ge  0, \Tr(\G_2 \G_1)=0,\|\G_3\|\le \eps, \G \ge \G_1-\G_2- \G_3$, we have
\begin{align} \label{SN-est4'}       
 n_S (\G)\ge n_S (\G_1).
 \end{align}
  \end{corollary}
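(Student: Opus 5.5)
The plan is to reduce the corollary directly to Proposition \ref{prop:SN-prop}, with $\G_2$ and $\G_3$ combined into one perturbation. Proposition \ref{prop:SN-prop} needs $r(\G_1)<\infty$, and I will assume this tacitly, as it holds in the application in the proof of Theorem \ref{thm:SN-est}: there $\G_1=e^{tL_0}\G_0'$ and $r(e^{tL_0}\G_0')\le r(\G_0)<\infty$. Without finite rank I would have to fall back on the lower semicontinuity of $n_S$ from \cite{Shir}, and the uniform $\eps$ would not be available. Let $P$ be the orthogonal projection onto $\Ran \G_1$, and set $\eps:=\eps(\G_1)$, the constant supplied by Proposition \ref{prop:SN-prop}.

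\emph{Step 1: $\G_2$ disappears after compression.} Since $\G_1,\G_2\ge 0$ are trace class and $\Tr(\G_2\G_1)=0$, Lemma \ref{lem:ortho} (applied with $\lam=\G_1$, $\mu=\G_2$) gives $P\G_2=\G_2P=0$. In particular $P\G_2P=0$.

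\emph{Step 2: $\G_3$ is a small perturbation on $\Ran\G_1$.} Since $\G_3$ is self-adjoint, being a difference of self-adjoint operators in the inequality, and $\|\G_3\|\le\eps$, we have $\G_3\le \eps\one$. Hence $P\G_3P\le \eps P$.

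\emph{Step 3: apply Proposition \ref{prop:SN-prop}.} Define $\tilde\G_2:=-\G_2-\G_3$. The hypothesis $\G\ge\G_1-\G_2-\G_3$ becomes $\G\ge \G_1+\tilde\G_2$, with $\G\ge 0$. By Steps 1 and 2,
\[
P\tilde\G_2P=-P\G_2P-P\G_3P=-P\G_3P\ge -\eps P .
\]
So all three conditions in \eqref{SN-cond} hold for the triple $(\G,\G_1,\tilde\G_2)$. Proposition \ref{prop:SN-prop} then yields $n_S(\G)\ge n_S(\G_1)$, which is \eqref{SN-est4'}.

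The only delicate point is the finite-rank requirement on $\G_1$ just discussed; apart from that the argument is immediate. Note that Proposition \ref{prop:SN-prop} only constrains the compression $P\tilde\G_2P$ and not $\tilde\G_2$ itself, so $\tilde\G_2$ need not be trace-class small off $\Ran\G_1$. This is exactly why the large but orthogonal negative term $\G_2$ is harmless.
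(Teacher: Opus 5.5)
Your reduction is exactly the derivation the paper intends: the paper obtains the corollary in one line from Proposition~\ref{prop:SN-prop} and Lemma~\ref{lem:ortho}, and your Steps 1--3 are correct \emph{as a reduction}. The gap is the appeal to Proposition~\ref{prop:SN-prop} in Step 3. That proposition is false, and so is the corollary, already for rank-one $\G_1$ in $\C^2\otimes\C^2$. So your finite-rank caveat is not the real obstruction. Here is a counterexample. Take orthonormal $e_1,e_2\in\cH_A$ and $f_1,f_2\in\cH_B$, and let $\Psi=(e_1\otimes f_1+e_2\otimes f_2)/\sqrt2$ and $\G_1=P_\Psi$, so $n_S(\G_1)=r_S(\Psi)=2$. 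Let $\G=\sum_{i,j=1}^2 P_{e_i}\otimes P_{f_j}$ be the orthogonal projection onto $\mathrm{span}\{e_1,e_2\}\otimes\mathrm{span}\{f_1,f_2\}$. Since $\Psi$ lies in this subspace, $\G\ge P_\Psi=\G_1$. Hence, with $\G_2=\G_3=0$, all hypotheses of the corollary hold for every $\eps\ge 0$, and the same data with zero perturbation satisfy \eqref{SN-cond}. But $\G/4$ is an equal mixture of product pure states, so $n_S(\G)=1<2=n_S(\G_1)$.

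The underlying error is that the Schmidt number is not monotone in the operator order. Adding a positive, even separable, operator can wash out entanglement; this is the familiar isotropic/Werner-state phenomenon. Consequently, \eqref{SN-est2} of Lemma~\ref{lem:SN-est} is false; the pair $\G\ge P_\Psi$ above violates it. The proof of Proposition~\ref{prop:SN-prop} also applies \eqref{SN-est1} with $Q=P$, the projection onto $\Ran\G_1$. However, \eqref{SN-est1} rests on $r_S(Q\Psi)\le r_S(\Psi)$, which holds only for local $Q=Q_A\otimes\one_B$. In the example, $P\G P=P_\Psi$ has Schmidt number $2>n_S(\G)$. No one-sided bound of the form $\G\ge\G_1-\G_2-\G_3$ can force $n_S(\G)\ge n_S(\G_1)$; what is missing is upper control of $\G$. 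Your fallback idea, the lower semicontinuity of $n_S$ from Shirokov's work, is the right tool, but it is a trace-norm statement. A correct substitute must show that $\G/\Tr\G$ is trace-norm close to $\G_1/\Tr\G_1$, with closeness depending on $\G_1$. It would also suffice to show this for the normalized image of $\G$ under some local operation $Q_A\otimes\one_B$, which cannot raise $n_S$ by \eqref{SN-est1}. The present hypotheses give no such closeness.
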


 \vspace*{.1cm}

 Finally, we present the following elementary properties of $n_S$:
 
\begin{lemma}\label{lem:SN-est-more} We have the following relations for the Schmidt number $n_S$:

 \begin{align}\label{SN-est0} & n_S (e^{t L_{ 0} }\G)=n_S (\G),\\ 
\label{SN-est3}&n_S (\G_1)+n_S (\G_2)\ge n_S (\G_1+\G_2). \end{align} 
\end{lemma}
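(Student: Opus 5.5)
Both relations follow directly from the definition \eqref{nS} of $n_S$ together with two facts about the Schmidt rank $r_S$ of pure states. We treat them in turn.

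\emph{Invariance \eqref{SN-est0}.} Write $e^{tL_0}\G=U_t\G U_t^*$ with
\[
U_t=e^{-iH_0t}=e^{-iH_\A t}\otimes e^{-iH_\B t},
\]
which holds because $H_\A\otimes\one_\B$ and $\one_\A\otimes H_\B$ commute. So $U_t=U_t^\A\otimes U_t^\B$ is a \emph{local} unitary. The first fact we need is $r_S(U_t\Psi)=r_S(\Psi)$. Indeed, if $\Psi=\sum_{j\le r}s_j\,\phi_j\otimes\chi_j$ is a Schmidt decomposition, then
\[
U_t\Psi=\sum_{j\le r}s_j\,U_t^\A\phi_j\otimes U_t^\B\chi_j
\]
is again a Schmidt decomposition with the same coefficients, since local unitaries map orthonormal families to orthonormal families.

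Next, for any $\mu\in M_\G$ define the pushforward $\mu_{U}(A):=\mu(U_t^{-1}A)$. Under $\Psi\mapsto U_t\Psi$ we have $P_\Psi\mapsto U_tP_\Psi U_t^*$, and this map $\cP\to\cP$ is an isometric homeomorphism. Hence $\mu_U$ is a Borel probability measure on $\cP$, and since conjugation commutes with the Bochner integral,
\[
\int P_\Psi\,d\mu_U=U_t\G U_t^*.
\]
The $\mu_U$-essential supremum of $r_S$ equals the $\mu$-essential supremum of $r_S\circ U_t=r_S$. Therefore $n_S(e^{tL_0}\G)\le n_S(\G)$. Applying the same argument with $U_t^*=U_{-t}$ gives the reverse inequality. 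This is the same pushforward device used for \eqref{SN-est1}; the only new ingredient is exact invariance of $r_S$ under local unitaries, in place of mere monotonicity.

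\emph{Subadditivity \eqref{SN-est3}.} Take $\mu_i\in M_{\G_i}$ with essential suprema at most $n_S(\G_i)+\delta$. We adopt the convention, implicit in the statement since $\G_i$ need not be normalized, that $M_\G$ consists of finite positive measures of total mass $\Tr\G$. Then $\mu:=\mu_1+\mu_2$ represents $\G_1+\G_2$ by linearity of the integral, and $\supp\mu=\supp\mu_1\cup\supp\mu_2$. Consequently the essential supremum of $r_S$ over $\mu$ equals
\[
\max\big(\operatorname*{sup}_{\supp\mu_1}r_S,\ \operatorname*{sup}_{\supp\mu_2}r_S\big)\le \max_i n_S(\G_i)+\delta\le n_S(\G_1)+n_S(\G_2)+\delta .
\]
Letting $\delta\to0$ gives \eqref{SN-est3}; in fact it gives the stronger bound $n_S(\G_1+\G_2)\le\max_i n_S(\G_i)$. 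The case $\G_i=0$ is covered by the convention $n_S(0)=0$.

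\emph{Where care is needed.} The main point requiring care is measure-theoretic rather than algebraic. One has to check that $r_S$ is Borel on $\cP$, which follows from the lower semicontinuity quoted from \cite{Shir}. One also has to check that pushforward under the homeomorphism $P_\Psi\mapsto U_tP_\Psi U_t^*$ preserves null sets, so that essential suprema transfer exactly between $\mu$ and $\mu_U$.
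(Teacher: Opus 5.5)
Your proof is correct and follows the same route as the paper. For \eqref{SN-est0}, you use invariance of $r_S$ under the local unitary $e^{-iH_{\A}t}\otimes e^{-iH_{\B}t}$, transferred to $n_S$ through the definition \eqref{nS}; for \eqref{SN-est3}, you combine representing measures $\mu_1,\mu_2$ so that the essential supremum of $r_S$ over the combined measure is controlled by those over $\mu_1$ and $\mu_2$. Your write-up is more careful than the paper's in three places: it gives the pushforward explicitly in both directions, it uses measures of mass $\Tr\G_i$ where the paper uses $\frac12(\mu_1+\mu_2)$, which tacitly treats $\G_1,\G_2$ as normalized, and it yields the sharper bound $n_S(\G_1+\G_2)\le\max_i n_S(\G_i)$.
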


 \begin{proof}[Proof of Lemma \ref{lem:SN-est-more}]
 Indeed, by $e^{t L_{ 0} }=e^{t L_{\A} } \otimes e^{t L_{\B} }$, we have $r_{S}(e^{t L_{ 0} }\Psi) = r_{S}(\Psi)$, which, by   definition \eqref{nS}, implies  Eq. \eqref{SN-est0}.  
 
  \vspace*{.1cm} 
 
 We check  Eq. \eqref{SN-est3}. By definition \eqref{nS}, $n_S(\G) := \inf_{\mu\in M_\G} n_S(\mu)$, where 
\[n_S(\mu):= \operatorname*{sup}_{\Psi\in \supp\mu} r_{S}(\Psi).\] 
Next,   let $\G=\frac12(\G_1 +\G_2)$, $\G_i= \G(\mu_i)$ (not unique), $i=1, 2$, and $\mu=\frac12(\mu_1 +\mu_2)$. We have
\begin{align}\label{SN-est3''} n_S(\mu) &= \operatorname*{sup}_{\Psi\in \supp\mu} r_{S}(\Psi)\le  \operatorname*{sup}_{\Psi\in \supp\mu_1} r_{S}(\Psi)+\operatorname*{sup}_{\Psi\in \supp\mu_2} r_{S}(\Psi) \\
&=n_S(\mu_1)+ n_S(\mu_2).\end{align} 
 Since this is true for any $\mu_i\in M_{\G_i}, i=1, 2,$ and since $n_S(\G) := \inf_{\mu\in M_\G} n_S(\mu)$, we conclude that Eq. \eqref{SN-est3} holds.\end{proof}

 \vspace*{.1cm}

 \begin{corollary}\label{cor:SN-prop} For any $\G_1 \ge  0$, there is $\eps=\eps(\G_1)$ such that $ \forall\ \G\ge  0, \G_2, \G_3,$  satisfying 
\begin{align} \label{SN-cond'}\chi\G_1= \G_1\chi = \G_1, \chi\G_2 \chi=0,\ \|\chi\G_3\chi\|_{S_1^\AB}\le \eps, \G \ge \G_1-\G_2- \G_3, \end{align} 
for some bounded operator $\chi$ acting only on $\cH_\A$, we have
\begin{align} \label{SN-est4'}  
 n_S (\G)\ge n_S (\G_1).
 \end{align}
  \end{corollary}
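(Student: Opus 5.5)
We reduce Corollary \ref{cor:SN-prop} to Proposition \ref{prop:SN-prop} by conjugating with $\chi$. This is natural because $\G_1$ is untouched by $\chi$, while $\G_2$ and $\G_3$ become harmless after compression.

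\emph{Step 1: compress.} Since $\chi$ acts only on $\cH_\A$, the map $\G\mapsto \chi^\otimes \G\chi^{\otimes *}$ preserves positivity. (Here $\chi^\otimes:=\chi\otimes\one_\B$; we use that $\chi$ is self-adjoint, as in the intended case of a characteristic function.) Applying it to $\G\ge \G_1-\G_2-\G_3$ and using $\chi\G_1=\G_1\chi=\G_1$ and $\chi\G_2\chi=0$ gives
\[
\tilde\G:=\chi\G\chi\ \ge\ \G_1-\chi\G_3\chi .
\]
Also $\tilde\G\ge0$. By \eqref{SN-est1} of Lemma \ref{lem:SN-est} with $Q=\chi^\otimes$, we have $n_S(\G)\ge n_S(\tilde\G)$. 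It therefore suffices to prove $n_S(\tilde\G)\ge n_S(\G_1)$.

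\emph{Step 2: apply Proposition \ref{prop:SN-prop}.} Set $\G_2':=-\chi\G_3\chi$, so that $\tilde\G\ge\G_1+\G_2'$. Let $P$ be the orthogonal projection onto $\Ran\G_1$. Since the operator norm is dominated by the trace norm,
\[
P\G_2'P\ \ge\ -\|\chi\G_3\chi\|\,P\ \ge\ -\|\chi\G_3\chi\|_{S_1^\AB}P\ \ge\ -\eps P .
\]
Choosing $\eps=\eps(\G_1)$ as in Proposition \ref{prop:SN-prop}, all hypotheses \eqref{SN-cond} hold for the triple $(\tilde\G,\G_1,\G_2')$. The proposition then yields $n_S(\tilde\G)\ge n_S(\G_1)$, and combining with Step 1 gives \eqref{SN-est4'}.

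\emph{Main obstacle.} Proposition \ref{prop:SN-prop} requires $\G_1$ to have finite rank, since that is what gives the lower bound $\G_1\ge cP$ with $c>0$. This assumption is not stated explicitly in the corollary, so we will add it. In the application, to Theorem \ref{thm:SN-est}, it is satisfied, because $\G_1=e^{tL_0}\G_0'$ has finite rank.

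A secondary point is making the choice of $\eps$ depend only on $\G_1$. Our plan is to take $\eps$ to be a fraction of the smallest nonzero eigenvalue $c$ of $\G_1$, for example $\eps=c/2$, which is what the proof of the proposition uses.

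If $\chi$ is not self-adjoint, we replace $\chi\cdot\chi$ by $\chi^\otimes\cdot\chi^{\otimes*}$ throughout. The hypothesis $\chi\G_1=\G_1\chi=\G_1$ then still gives $\chi\G_1\chi^*=\G_1$ after taking adjoints, since $\G_1$ is self-adjoint.
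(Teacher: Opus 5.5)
Your reduction is essentially the paper's own. The paper applies Proposition~\ref{prop:SN-prop} directly to $\G$, with $-\G_2-\G_3$ in the role of the proposition's $\G_2$. It uses that $\chi\G_1=\G_1\chi=\G_1$ gives $\chi P=P\chi=P$, hence $P(-\G_2-\G_3)P=-P\chi\G_3\chi P\ge-\eps P$. Your extra compression by $\chi\otimes\one_B$ is harmless but unnecessary. (It also needs $\chi=\chi^*$: for non-self-adjoint $\chi$ the hypotheses on $\G_2,\G_3$ are stated with $\chi\,\cdot\,\chi$, not $\chi\,\cdot\,\chi^*$.) You are right that finite rank of $\G_1$ is silently needed.

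The genuine gap is Step~2. Proposition~\ref{prop:SN-prop} is false, and so is the corollary, even with finite rank added. Take orthonormal $e_1,e_2\in\cH_A$ and $f_1,f_2\in\cH_B$, set $\Phi=(e_1\otimes f_1+e_2\otimes f_2)/\sqrt2$, and let $\Pi_A,\Pi_B$ be the projections onto $\operatorname{span}\{e_1,e_2\}$ and $\operatorname{span}\{f_1,f_2\}$. Choose $\chi=\Pi_A$, $\G_1=\tfrac14P_\Phi$ (rank one), $\G=\tfrac14\Pi_A\otimes\Pi_B$ and $\G_2=\G_3=0$. Then $\chi\G_1=\G_1\chi=\G_1$, and $\G\ge\G_1$ since $\Phi\in\Ran(\Pi_A\otimes\Pi_B)$. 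So every hypothesis holds for every $\eps>0$, and so do those of Proposition~\ref{prop:SN-prop} for your triple ($\tilde\G=\chi\G\chi=\G$, $\G_2'=0$). Yet $\G=\tfrac14\sum_{i,j}P_{e_i\otimes f_j}$ is separable, so $n_S(\G)=1<2=r_S(\Phi)=n_S(\G_1)$.

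The mechanism is that a one-sided bound $\G\ge\G_1-(\text{small})$ lets one add arbitrary positive noise, and noise destroys entanglement. For example, $pP_\Phi+\tfrac{1-p}{4}\Pi_A\otimes\Pi_B$ is separable for $p\le\tfrac13$; $n_S$ is not monotone in the operator order. The paper's proof of Proposition~\ref{prop:SN-prop} breaks in two places:
\begin{itemize}
\item It applies \eqref{SN-est1} with $Q=P$, the projection onto $\Ran\G_1$. That projection is not of the local form $Q_A\otimes\one_B$ for which \eqref{SN-est1} holds; in the example, $P\G P=\tfrac14P_\Phi$ has $n_S=2>n_S(\G)$.
\item It invokes \eqref{SN-est2}, which the pair $\G\ge\G_1$ above refutes directly.
\end{itemize}
Your Step~1 is legitimate, since $\chi\otimes\one_B$ is local, but no argument can complete Step~2 from these hypotheses. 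What is true is lower semicontinuity of $n_S$, the Shirokov property quoted in the proof of Theorem~\ref{thm:SN-criter}. For $\G_1\neq0$ there is $\eps(\G_1)>0$ such that $\|\G-\G_1\|_{S_1}\le\eps$ implies $n_S(\G)\ge n_S(\G_1)$. A correct version of the corollary, and of its use in Theorem~\ref{thm:SN-est}, needs such two-sided trace-norm control of $\G$, not merely a lower operator bound.
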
   
 \begin{proof}Eq. \eqref{SN-cond'} implies $\chi P=P\chi = P$ and $P\G_2 P=P\chi\G_2 \chi P=0$, which give Eq. \eqref{SN-cond} and therefore \eqref{SN-est4}.\end{proof}


\end{document}